\setlist{nosep}
\pgfplotsset{compat=newest}
\newcommand{\sprot}{\sigma_{\text{prot}}}
\renewcommand{\Gcal}{G}
\renewcommand{\Vcal}{V}
\renewcommand{\Ecal}{E}
\renewcommand{\Rcal}{R}
\DeclareMathOperator{\Ncal}{nbr}
\DeclareMathOperator{\nnbr}{\overline{\Ncal}}
\renewcommand{\nbu}{\nnbr(u)}
\DeclareMathOperator{\Pcal}{prot}
\DeclareMathOperator{\Tcal}{pub}
 \renewcommand{\cite}{\citep}
\renewcommand{\Omega}{{\small \textsc{Vol}}}
\DeclareMathOperator{\Area}{{\small \textsf{ComVol}}}
\begin{document}
\def\papertitle{Differentially Private Link Prediction\\ With Protected Connections}
\title{\papertitle}
\author{Abir De}
\author{Soumen Chakrabarti}

\affil{Indian Institute of Technology Bombay,\\ \{abir, soumen\}@cse.iitb.ac.in }

\date{}
\maketitle

\begin{abstract}
Link prediction (LP) algorithms propose to each node a ranked list of
nodes that are currently non-neighbors, as the most likely candidates for future linkage. Owing to increasing concerns about privacy, users (nodes) may prefer to keep some of their connections protected or private. Motivated by this observation, our goal is to design a
\dpp\ LP algorithm, which trades off between privacy of the protected node-pairs and the link prediction accuracy. More specifically, we first propose a form of differential privacy on graphs, which models the privacy loss only of those node-pairs which are marked as protected. Next, we develop \our, a learning to rank algorithm, which applies a monotone transform to base scores from a non-private LP system, and then adds noise.  \our\ is trained with a privacy induced ranking loss, which optimizes the ranking utility for a given maximum allowed level of privacy leakage of the protected node-pairs. Under a recently-introduced latent node embedding model, we present a formal trade-off between privacy and LP utility. Extensive experiments with several real-life graphs and several LP heuristics show that \our\ can trade off between privacy and predictive performance more effectively than several alternatives.
\end{abstract}

\section{Introduction}
\label{sec:Intro}

Link prediction (LP) \citep{LibenNowellK2007LinkPred,LuZ2011LinkPred} 
is the task of predicting future edges that 
are likely to emerge between nodes in a social network, given 
historical views of it.  Predicting new research collaborators, new Facebook friends and new LinkedIn connections
are examples of LP tasks. 
LP algorithms usually
presents to each node $u$ a ranking of the most promising non-neighbors
$v$ of~$u$ at the time of prediction.

Owing to the growing concerns about privacy~\cite{p1,MachKDS2011AccuratePrivate,waniek2019hide},
social media users often prefer to mark their sensitive contacts, \eg,
dating partners, prospective employers, doctors, etc. as `protected' from 
other users and the Internet at large.
However, the social media hosting platform itself continues to enjoy
full access to the network, and may exploit them for
link prediction.  In fact, many popular LP algorithms
often leak neighbor information~\cite{waniek2019hide}, at least in a probabilistic sense,
because they take advantage of rampant 
\emph{triad completion}: when node $u$ links to $v$, very often there exists a
node $w$ such that edges $(u,w)$ and $(w,v)$ already exist.
Therefore,\todo{SC:Check} recommending $v$ to $u$  allows $u$ to learn about the edge $(w,v)$, which may result in breach of privacy of $v$ and~$w$. 
Motivated by these observations, a recent line of work~\cite{MachKDS2011AccuratePrivate,xu2018dpne}
proposes privacy preserving LP methods.
While they introduce sufficient noise to ensure a specified level of privacy, they do not directly optimize for the LP ranking utility.  Consequently, they show poor predictive performance.

\subsection{Present work}
In this work, we design a learning to rank algorithm which aims to optimize the LP utility under a given amount
of privacy leakage of the node-pairs (edges or non-edges) which are marked as protected. Specifically, we make the following contributions.

\xhdr{Modeling differential privacy of protected node pairs}
Differential privacy (DP) \citep{DworkR2014AlgoDP} is concerned with
privacy loss of a \emph{single entry} in a database upon (noisy) datum disclosure.
As emphasized by~\citet[Section~2.3.3]{DworkR2014AlgoDP},
the meaning of {\em single entry} varies across applications, especially in graph databases.
DP algorithms over graphs predominantly consider
two notions of granularity, \eg, node and edge DP, where one entry is represented by a node and an edge respectively.
While node DP provides a stronger privacy guarantee than edge DP, it is often stronger than what is needed for applications.  As a result, it
can result in unacceptably low utility~\cite{song2018differentially, DworkR2014AlgoDP}. To address this challenge,
we adopt a variant of differential privacy, called \eprivacy,
which accounts for the loss of privacy of only those node-pairs (edges or non edges), which are marked as private. %
Such a privacy model is well suited for LP in social networks,
where users may hide only some sensitive contacts and leave the remaining relations public~\cite{DworkR2014AlgoDP}. 
It is much stronger than edge DP, but weaker than node DP. However, it serves the purpose in the context
of our current problem.

\xhdr{Maximizing LP accuracy subject to privacy constraints}
Having defined an appropriate granularity of privacy, we develop \our, a perturbation wrapper around a base LP algorithm, that renders it private in the sense of node-pair protection described above.  Unlike DP disclosure of \emph{quantities}, the numeric scores from the base LP algorithm are not significant in themselves, but induce a ranking.  This demands a fresh approach to designing \our, which works in two steps.
First, it transforms the base LP scores using a monotone deep network \citep{umnn}, which ensures that the ranks remain unaffected in absence of privacy.  Noise is then added to these transformed scores, and the noised scores are used to sample a privacy-protecting ranking over candidate neighbors.
\our\ is trained using a ranking loss (AUC) surrogate.
Effectively, \our{} obtains a sampling distribution that optimizes ranking utility, while maintaining privacy for protected pairs.
Extensive experiments show that \our\ outperforms  state-of-the-art baselines \todo{@ad just these 2?} \citep{MachKDS2011AccuratePrivate,McSherryT2007Mechanism,geng2015optimal} in terms of predictive accuracy, under identical privacy constraints\footnote{Our code is available in  https://github.com/abir-de/dplp}.

\xhdr{Bounding ranking utility subject to privacy}
Finally, we theoretically analyze the ranking utility provided by three LP algorithms~\cite{LibenNowellK2007LinkPred}, viz.,  Adamic-Adar, Common Neighbors and Jaccard Coefficient, in presence of
privacy constraints.  Even without privacy constraints, these and other LP algorithms are heuristic in nature and can produce imperfect rankings.
Therefore, unlike \citet{MachKDS2011AccuratePrivate},
we model utility not with reference to a non-DP LP algorithm, but a generative process that creates the graph.
Specifically, we consider the generative model proposed by~\citet{SarkarCM2011LPembed},
which establishes predictive power of popular (non-DP) LP heuristics.  They model latent node embeddings in a geometric
space \citep{HoffRH2002LatentSpace} as causative forces that
drive edge creation {but without any consideration for privacy.  In contrast, we bound the \emph{additional loss} in ranking utility when privacy constraints are enforced and therefore, our results change the setting and require new techniques.}

\subsection{Related work}
The quest for correct privacy protection granularity has driven much work in the DP community. \citet{kearns2015privacy} propose a model in which a fraction of nodes are completely protected, whereas the others do not get any privacy assurance. This fits certain applications \eg, epidemic or terrorist tracing, but not LP, where a node/user may choose to conceal only a few sensitive contacts (\eg, doctor or partner) and leave others public.  Edge/non-edge protection constraints can be encoded in the formalism of group \cite{dwork2011differential}, Pufferfish \cite{song2017pufferfish,kifer2014pufferfish,song2017composition} and Blowfish \cite{he2014blowfish} privacy. 
However, these do not focus on LP tasks on graphs.
\citet{ppsRavi} partition a graph into private and public subgraphs, but for designing efficient sketching and sampling algorithms, not directly related to \dpp\ LP objectives.
{In particular, they do not use any information from private graphs; however, a differentially private algorithm does use private information after adding noise into it. Moreover, their algorithms do not include any learning to rank method for link prediction.}
\citet{abadi2016deep} design a DP gradient descent algorithm for deep learning, which however,
 does not aim to optimize the associated utility. Moreover, they work with pointwise loss function, whereas our objective is a pairwise loss.
Other work \citep{geng2015optimal, geng2018optimal,ghosh2012universally} aims to find optimal noise distribution for queries seeking real values.  
However, they predominantly use variants of noise power in their objective, whereas we use ranking loss, designed specifically for LP tasks. 


\section{Preliminaries}
\label{sec:Prelim}

In this section, we first set up the necessary notation and give a brief overview 
of a generic non-private LP algorithm on a social network. Then we introduce the notion of protected and non-protected node pairs.
\subsection{Social network and LP algorithm}
We consider a snapshot of the social network as an undirected
graph $\Gcal= (\Vcal, \Ecal)$ with vertices $\Vcal$ and edges $\Ecal$.
Neighbors and non-neighbors of node $u$ are denoted as the sets $\Ncal(u)$ and  $\nbu$, respectively.
Given this snapshot, a non-private LP algorithm $\Acal$
first implements a scoring function $s_\Acal(u,\cdot): \nbu\to\RR^+$
and then sorts the scores $s_\Acal(u,v)$ of all the non-neighbors $v\in \nbu$ in decreasing order to obtain a ranking  of the potential neighbors for each query node $u$.
Thus, the LP algorithm $\Acal$ provides a map:
$\pi^\Acal_u:\{1,2,\hdots,|\nbu|\}  \leftrightarrow_{1:1} \nbu$.
Here, $\pi^\Acal_u (i)$ represents the node at rank $i$, recommended to node $u$ by algorithm $\Acal$.
For brevity, we sometimes denote $u^\Acal _i = \pi^\Acal_u (i)$.
Note that, although $\pi^\Acal_u$ can provide ranking of all candidate nodes, most LP applications consider only top $K$ candidates for some small value of $K$.  When clear from context, we will use $\pi^\Acal_u$ to denote the top-$K$ items.

\subsection{Protected and non-protected node-pairs} 
We assume that each node $u\in\Vcal$ partitions all other nodes $\Vcal\cp \set{u}$
into a protected node set $\Pcal(u)$ and a non-protected/public node set $\Tcal(u)$.  We call
$(u,w)$ a protected node-pair if either $w$ marks $u$ as protected, \ie, $u\in \Pcal(w)$ or vice-versa, \ie, $w\in\Pcal(u)$. 
In general, any node $u$ can access three sets of information:
(i)~its own connections with other nodes, \ie, $\Ncal(u)$ and $\nnbr(u)$;
(ii)~the possible connections between any node $w$ (not necessarily a neighbor) with nodes $\Tcal(w)$ not protected by~$w$; and, 
(iii)~the ranked list of nodes recommended to itself by any LP algorithm.
Note that a node-pair $(u,w)$ is accessible to $u$ even if $u\in \Pcal(w)$, since $u$ knows its own edges (and non-edges). On the other hand, if a third node $v\in \Vcal\cp\set{u,w}$ is marked as protected by $w$, \ie, $v\in \Pcal(w)$, then $u$ should not know whether $v\in \Ncal(w)$ or $v\in \nnbr(w)$.  However, when a non-private LP algorithm recommends nodes to $u$, $u$ can exploit them to reverse-engineer such protected links. For example, if $w$ is connected to both $u$ and $v$, then a triad-completion based LP heuristic (Adamic-Adar or Common-Neighbor) is likely to recommend $v$ to the node $u$, which in turn allows $u$ to guess the presence of the edge $(v,w)$. Figure~\ref{fig:priv-illus} illustrates our problem setup.
\begin{figure}[t]
\centering
\includegraphics[width=0.51\textwidth]{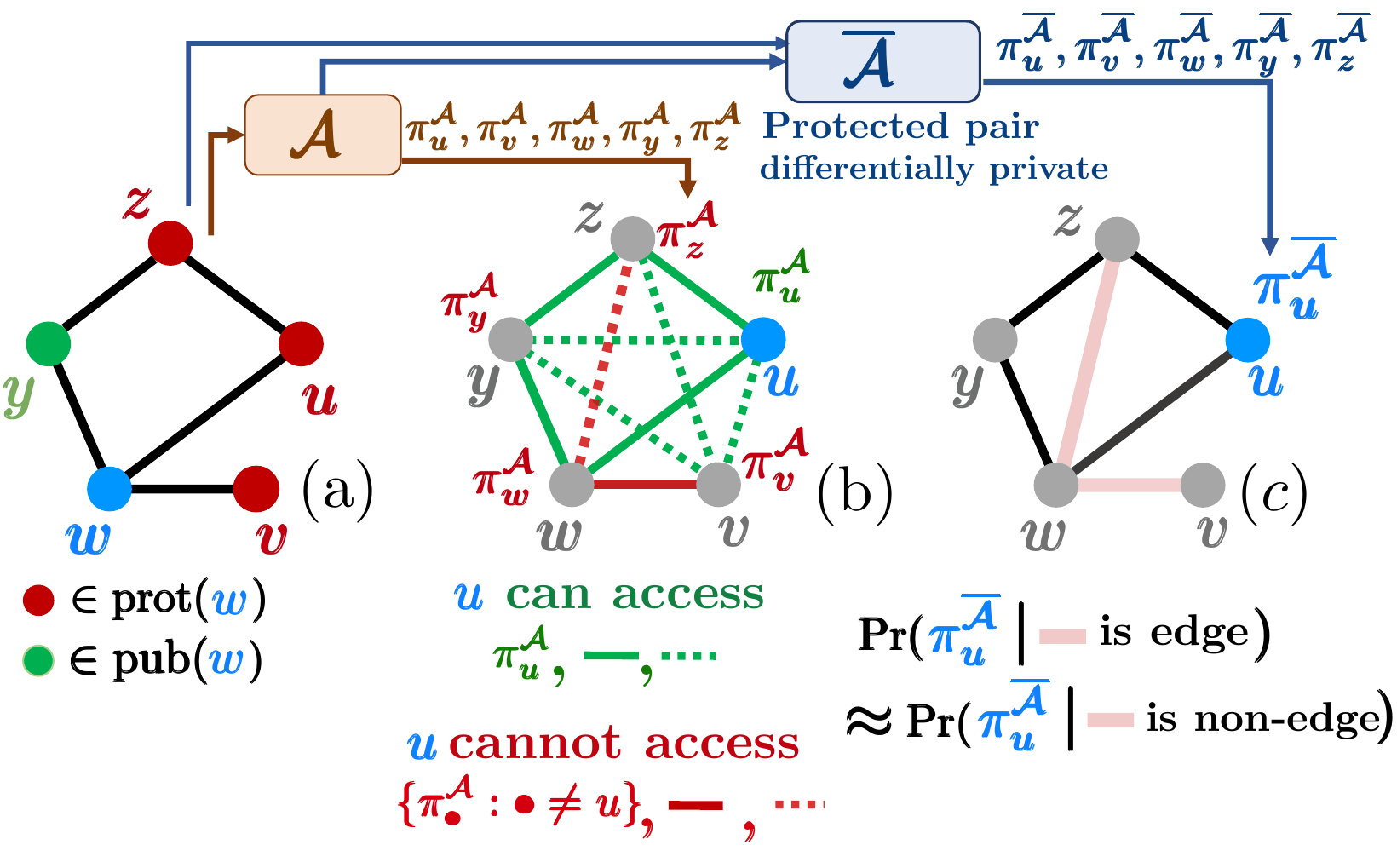}
\caption{Problem setup.
(a)~Nodes in $\Pcal(w)$ colored red; nodes in $\Tcal(w)$ colored green.
(b)~$u$ cannot see red edges (solid), red non-edges (dotted) and recommendations ($\pi^{\Acal} _{\bullet}, \bullet \neq u$) to other nodes.
$u$ can see green edges (solid), green non-edges (dotted) and recommendation ($\pi^{\Acal} _{u}$) to itself. 
(c)~Output of $\App$ should be insensitive to protected pairs.}
\label{fig:priv-illus}
\end{figure}

\section{Privacy model for protected pairs} 

In this section, we first present our privacy model which assesses the privacy loss of protected node-pairs, given the predictions made by an LP algorithm. Then we introduce the sensitivity function, specifically designed for graphs with protected pairs. 
\subsection{Protected-pair differential privacy}
In order to the prevent privacy leakage of the protected connections,
we aim to design a privacy preserving LP algorithm $\App$ based 
on $\Acal$, meaning that $\App$ uses the scores $s_{\Acal}(u,\cdot)$ to present to each node $u$ a ranking $\pi^\App_u$, while maintaining the privacy of all the protected node pairs in the graph, where one of the two nodes in such a pair is not $u$ itself.
We expect that $u$ has strong reasons to link to the top nodes in $\pi^\App_u$, and that the same list will occur again with nearly equal probability, if one changes (adds or deletes) edges between any other node $w$ and the protected nodes of $w$, other than $u$, \ie, $\Pcal(w)\cp\set{u}$.  We formally define such changes in the graph in the following.

\begin{definition}[\bfseries $u$-neighboring graphs] \label{def:uNbrGraphs}
Given two graphs $\Gcal=(\Vcal,\Ecal)$ and $\Gcal'=(\Vcal,\Ecal')$ with the same node set $\Vcal$, and a node $u\in\Vcal$,  we call $\Gcal$, $\Gcal'$ as $u$-neighboring graphs, if there exists some node $w\ne u$ (not necessarily a neighbor), such that
\begin{enumerate}
\item protected nodes $\Pcal(w)$ are the same in $\Gcal$ and~$\Gcal'$,
\item public nodes $\Tcal(w)$ are the same in $\Gcal$ and~$\Gcal'$, and
\item $\Ecal'{\setminus}\{(w,v)\given v{\in}\Pcal(w)\cp u\} {=} \Ecal
{\setminus} \{(w,v)\given v{\in}\Pcal(w)\cp u \}$.
\end{enumerate}
The set of all $u$-neighboring graph pairs is denoted as $\mathfrak{N}_u$.
\end{definition}

\noindent In words, the last requirement is that, for the purpose of recommendation to $u$, the information not protected by $w$ is the same in $\Gcal$ and~$\Gcal'$.
$u$-neighboring graphs may differ only in the protected pairs incident to
some other node $w$, except $(w,u)$.  Next, we propose our privacy model which characterizes the privacy leakage of protected node-pairs in the context of link prediction.

\begin{definition}[\bfseries Protected-pair differential privacy] \label{def:epp}
We are given a non-private base LP algorithm $\Acal$, and a randomized LP routine $\App$ based on $\Acal$, which returns a ranking of $K$ nodes.  We say that ``$\App$ ensures $\epsilon_p$-\eprivacy'', if, for all nodes $u\in\Vcal$ and for all ranking $\Rcal_K$ on the candidate nodes, truncated after position~$K$, we have $\Pr(\pi^\App _u =\Rcal_K\given \Acal, \Gcal) \le e^{\epsilon_p} \Pr(\pi^\App _u =\Rcal_K \given \Acal, \Gcal')$, whenever $\Gcal$ and $\Gcal'$ are $u$-neighboring graphs.  
\end{definition}

\noindent In words, $\App$ promises to recommend the same list to $u$, with nearly the same probability, even if we arbitrarily change the protected partners of any node $w$, except for $u$ itself, as that information is already known to~$u$.
Connections between protected-pair differential privacy and earlier privacy definitions are drawn in Appendix~\ref{sec:connectionAppen}.

\subsection{Graph sensitivity}

Recall that $\Acal$ recommends $v$ to $u$ based on a score $s_\Acal(u,v)$.  A key component of our strategy is to apply a (monotone) transform $f(s_\Acal(u,v))$ to balance it more effectively against samples from a noise generator.  To that end, given base LP algorithm $\Acal$, we define  the sensitivity of any function on the scores $\tf(s_{\Acal}(u,\cdot)):\Vcal\to\RR^+$, which  will be subsequently used to design any \eprivate\ LP algorithm.

\begin{definition}[\bfseries Sensitivity of $\tf \circ s_\Acal$]
The sensitivity $\Deltaf$ of a transformation $\tf$ on the scoring function $s_\Acal$ is defined as
\begin{align}
\Deltaf = \max_u \max_{(G,G')\in\mathfrak{N}_u} \max_v 
\big|&\tf(s_\Acal(u,v|\Gcal))   
 - \tf(s_\Acal(u,v|\Gcal'))\big|
\end{align}
Recall $\mathfrak{N}_u$ represents all $u$-neighboring graph pairs.
\end{definition}

\section{Design of privacy protocols for LP}
\label{sec:Priv}

In this section, we first state our problem of designing a privacy preserving LP algorithm $\App$--- based on a non-private LP algorithm $\Acal$--- which aims to provide an optimal trade-off between privacy leakage of the protected node pairs and LP accuracy.
Next,  we develop \our, our proposed algorithm which solves this problem,

\subsection{Problem statement} 

Privacy can always be trivially protected by ignoring private data and sampling each node uniformly, but such an LP `algorithm' has no predictive value. Hence, it is important to design a privacy preserving LP algorithm which can also provide a reasonable LP accuracy. To this end, given a base LP algorithm $\Acal$, our goal is to design a \eprivate\ algorithm $\App$ \emph{based on $\Acal$}, which maximizes the link prediction accuracy.  $\App$ uses random noise, and induces a distribution $\pi^\App_u$ over rankings presented to~$u$.  Specifically, our goal is solve the following optimization problem:
\begin{align}
&\underset{\App}{\text{maximize}}\  {\sum_{u\in\Vcal}\EE_{\Rcal_K \sim \PP( \pi^\App _u |\Acal, \Gcal)}  \text{AUC}(\Rcal_K)}
\label{eq:prob-def}\\
& \text{s.t.} \quad \App \quad \text{is $K\epsilon_p$ \eprivate}. 
\notag
\end{align}
Here, AUC($\Rcal_K$) measures the area under the ROC curve given by $R_K$--- the ranked list of recommended nodes truncated at~$K$.
Hence, the above optimization problem maximizes the expected AUC over the rankings generated by the randomized protocol $\App$, based on the scores produced by~$\Acal$.
The constraint indicates that  predicting \emph{a single link} using $\App$ requires to satisfy $\epsilon_p$-\eprivacy.

\subsection{\our: A \eprivate\ LP algorithm}

A key question is whether well-known generic privacy protocols like Laplace or exponential noising are adequate solvers of the optimization problem in Eq.~\eqref{eq:prob-def}, or might a solution tailored to \eprivacy{} in LP do better.  We answer in the affirmative by presenting \our, a novel DP link predictor, with  five salient components.
\begin{enumerate}
\item Monotone transformation of the original scores from~$\Acal$.
\item Design of a ranking distribution $\pi^\App_u$, which suitably introduces noise to the original scores to make a sampled ranking insensitive to protected node pairs.
\item Re-parameterization of the ranking distribution to facilitate effective training.
\item Design of a tractable ranking objective that approximates AUC in Eq.~\eqref{eq:prob-def}.
\item Sampling current non-neighbors for recommendation. 
\end{enumerate}

\paragraph{Monotone transformation of base scores}
We equip $\App$ with a trainable monotone transformation $f:\RR^+ \to \RR^+$, which is applied to base scores $s_{\Acal}(u,\cdot)$.  Transform $f$ helps us more effectively balance signal from $\Acal$ with the privacy-assuring noise distribution we use.  We view this transformation from two perspectives. 

First, $f(s_{\Acal}(u,\cdot))$ can be seen as a new scoring function which aims to boost the ranking utility of the original scores $s_{\Acal}(u,\cdot)$, under a given privacy constraint. In the absence of privacy constraints, we can reasonably demand that the ranking quality of $\tf\circ s_\Acal$ be at least as good as~$s_\Acal$.
In other words, we expect that the transformed and
the original scores provide the same ranking utility, i.e., the same list of recommended nodes, in the absence of privacy constraints.
A monotone $f$ ensures such a property.

Second, as we shall see, $f$ can also be seen as a \emph{probability distribution transformer}.  Given a sufficiently expressive neural representation of $f$ (as designed in Section~\ref{sec:dplp:neural-f}), any arbitrary distribution on the original scores can be fitted with a tractable known distribution on the transformed scores.  This helps us build a high-capacity but tractable ranking distribution $\PP(\pi^\App _u \given \Acal, \Gcal)$, as described below.

\paragraph{Design of a ranking distribution} 
Having transformed the scores $s_{\Acal}(u,.)$ into $\tf(s_{\Acal}(u,v))$, we draw nodes $v$ using an iterative exponential mechanism on the transformed scores $\tf(s_{\Acal}(u,v))$. 
Specifically, we draw the first node $v$ with probability proportional to $\exp \left(  {\epsilon_p \tf(s_{\Acal} (u,v))}/2{\Deltaf}\right)$, then repeat for collecting $K$ nodes in all.  Thus, our ranking distribution in Eq.~\eqref{eq:prob-def} is: 
\begin{align}
\Pr(\pi^\App _u &  = \Rcal_K |\Acal, \Gcal) =  
 \displaystyle \prod_{j=1} ^ K \dfrac{\exp\left( \dfrac{\epsilon_p \tf(s_\Acal(u, \Rcal_K(j)))}{2\Deltaf} \right)}{{\sum_{w\not \in \Rcal_K(1,..,j-1)}}\exp\left(\dfrac{\epsilon_p \tf(s_\Acal(u,w))}{2\Deltaf} \right) },\label{eq:exp-mech-iterative}
\end{align}
In the absence of privacy concerns ($\epsilon_p\to\infty$), the candidate node with the highest score $\tf(s_{\Acal}(u,.))$  gets selected in each step, and the algorithm reduces to the original non-private LP algorithm $\Acal$, thanks to the monotonicity of $\tf$. On the other hand, if $\epsilon_p\to 0$, then every node has an equal chance of getting selected, which preserves privacy but has low predictive utility.

Indeed, our ranking distribution in Eq.~\eqref{eq:exp-mech-iterative} follows an exponential mechanism on the new scores $\tf(s_{\Acal}(u,\cdot))$. However, in principle, such a mechanism can capture any arbitrary ranking distribution $\Pr(\pi^\App _u|\Acal, \Gcal)$, given sufficient training and expressiveness of $\tf$. 

Finally, we note that $\tf$ can be designed to limit~$\Deltaf$.  E.g., if we can ensure that $\tf(\cdot)$ is positive and bounded by $B$, then we can have $\Deltaf \le B$; if we can ensure that the derivative $\tf'(\cdot)$ is bounded by $B'$, then, by the Lipschitz property, $\Deltaf\le B'\Delta_{\Acal}$.

\paragraph{Re-parameterization of the ranking distribution}
In practice, the above-described procedure for sampling a top-$K$ ranking results in high estimation variance during training~\cite{zhao2011analysis, policygrad1, policygrad2, policygrad3}.  To overcome this problem, we  make a slight modification to the softmax method.  We first add i.i.d Gumbel noise $\eta_{u,v} \sim \gumbel(2\Deltaf/\epsilon_p)$ to each score $\tf(s_{\Acal}(u,v))$ and then take top $K$ nodes 
with highest noisy score~\cite[Lemma~4.2]{durfee2019practical}. Such a distribution allows an easy re-parameterization trick --- $\gumbel(2\Deltaf/\epsilon_p)= (2\Deltaf/\epsilon_p)\cdot\gumbel(1)$ --- which reduces the parameter variance.
With such a re-parameterization of the softmax ranking distribution in Eq.~\eqref{eq:exp-mech-iterative}, we get the following alternative representation of our objective in Eq.~\eqref{eq:prob-def}:
\begin{align}
{ {\sum_{u\in\Vcal}}\underset{\set{\eta_{u,\bullet}}}{\EE}\underset{\sim \gumbel(1)}{} 
     \text{AUC}\left(\Rcal_K\bigg(\bigg\{\tf(s_{\Acal} (u,\bullet) + \tfrac{2\Deltaf \eta_{u,\bullet}}{\epsilon_p}\bigg\}\bigg) \right)},\nn
\end{align}
where $\Rcal_K(.)$ gives a top-$K$ ranking based on the noisy transformed scores $\set{\tf(s_{\Acal} (u,\bullet) + {2\Deltaf \eta_{u,\bullet}}/{\epsilon_p}}$ over the candidate nodes for recommendation to~$u$.

\paragraph{Designing a tractable ranking objective}
Despite the aforementioned re-parameterization trick, 
standard optimization tools face several challenges to maximize the above objective. First, optimizing AUC is an NP-hard problem.
Second, truncating the list up to $K$ nodes often pushes many neighbors out of the list at the initial stages of learning, which can lead to poor training~\cite{Liu2009LearningToRank}. 
To overcome these limitations, we replace AUC \cite{auc1}  with a surrogate pairwise hinge loss $\ell(\cdot)$ and consider a non-truncated list $\Rcal_{|\Vcal|-1}$ of all possible candidate nodes \emph{during training}~\cite{auc1,auc3}.
Hence our optimization problem becomes:
\begin{align}
     \underset{\tf}{\text{min}} \  {\sum_{u\in\Vcal}}
    \   {\EE}_{\set{\eta_{u,\bullet}} \sim \gumbel(1) }{}  \ell\left( \left\{\tf(s_{\Acal}(u,\bullet)), \eta_{u\bullet}\right\}; \Gcal \right) \label{eq:opt-problem}
\end{align}
where $  \ell\left( \left\{\tf(s_{\Acal}(u,\bullet)), \eta_{u\bullet}\right\}; \Gcal \right)$ is given by a pairwise ranking loss surrogate over the noisy transformed scores:
\begin{align}
\sum_{\substack{g\in \Ncal(u)\cap \Tcal(u)\\ b\in \nbu \cap \Tcal(u) }}  
   \hspace{-2em}  \text{ReLU} \bigg[\varrho + \tf(s_{\Acal} (u,b)) + \dfrac{2\Deltaf\eta_{ub}}{\epsilon_p}  
 -\tf(s_{\Acal} (u,g))-\dfrac{2\Deltaf\eta_{ug}}{\epsilon_p}  \bigg].  \label{eq:ranking-loss}
\end{align}
The above loss function encourages that the (noisy) transformed score of a neighbor $g$ exceeds the corresponding score of a non-neighbor $b$ by at least (a tunable) margin~$\varrho$.
In absence of privacy concern, \ie, $\epsilon_p\to\infty$, it
is simply $\text{ReLU}\left[\varrho + \tf(s_{\Acal} (u,b))-\tf(s_{\Acal} (u,g))\right]$, which would provide the same ranking as $\Acal$, due to the monotonicity of $f$.

We would like to point out that the objective in Eq~\ref{eq:ranking-loss} uses only non-protected pairs $\{(u,v)|v\in \Tcal(u)\}$ to train~$\tf$.  This ensures no privacy is leaked during training.

\setlength{\textfloatsep}{3pt}
\begin{algorithm}[t]
\small
\caption{\our: Learns $\App$ based on a non-private  LP algorithm $\Acal$ and then uses it to recommend $K$ nodes to~$u$.}
 \label{alg:dpone}
  \begin{algorithmic}[1]
    \STATE \textbf{Input} $\Gcal=(\Vcal,\Ecal)$; protected and non-protected node-sets $\Pcal(\cdot), \Tcal(\cdot)$; scores $s_{\Acal}(.,.)$ given by $\Acal$; privacy leakage $\epsilon_p$, margin $\varrho$.
    \STATE \textbf{Output} $\Rcal_K:$ List of top-$K$ recommended nodes to node $u$   
    \STATE \textbf{Initialize} $\Rcal_K(1 \ldots K)\leftarrow \emptyset$
    \STATE $\tf\leftarrow$ \textsc{Train}$\left(\Gcal,\set{s_{\Acal}(w,v)\given v \in \Tcal(w), u\in\Vcal},\epsilon_p, \varrho\right)$
     \STATE $\text{candidates}\leftarrow \nnbr(u)$
     \FOR{$j$ in $1\ldots K$}
     \STATE $w\sim \textsc{SoftMax}\left(\set{\epsilon_p\tf(s_{\Acal}(u,v)/2\Deltaf \given v\in\text{candidates} } \right)$
     \STATE $\text{candidates}\leftarrow \text{candidates}\cp \set{w}$
     \STATE $\Rcal_K(j)\leftarrow w$
     \ENDFOR
     \STATE \mbox{\bf Return} $\Rcal_K$
  \end{algorithmic}
\end{algorithm}
\xhdr{Sampling nodes for recommendation} Once we train $\tf$ by solving the optimization problem in Eq.~\eqref{eq:opt-problem}, we draw top-$K$ candidate nodes using the trained $\tf$.  Specifically, we first sample a non-neighbor $v$ with probability proportional to
$\exp\left(\epsilon_p \tf(s_\Acal(u, v))/2\Deltaf\right)$
and then repeat the same process on the remaining non-neighbors $\nbu\cp\{v\}$ and continue $K$ times.

\xhdr{Overall algorithm}
We summarize the overall algorithm in Algorithm~\ref{alg:dpone}, where the $\textsc{Train}(\bullet)$ routine solves the
optimization problem in Eq.~\eqref{eq:opt-problem}. Since there is no privacy leakage during training, the entire process --- from training to test --- ensures $K\epsilon_p$ \pprivacy, which is formalized in the following proposition and proven in Appendix~\ref{sec:appen:dponeX}.
\begin{proposition}\label{thm:dpone} 
Algorithm~\ref{alg:dpone} implements $K\epsilon_p$-\pprivate{} LP.
\end{proposition}
\paragraph{Incompatibility with node and edge DP}
Note that, our proposal is incompatible with node or edge privacy (where each node or edge is protected).  In those cases, objective in Eq.~\eqref{eq:ranking-loss} cannot access any data in the training graph.
However, our proposal is useful in many applications like online social networks, in which,  a user
usually protects a fraction of his sensitive connections, while leaving others public. 

\subsection{Structure of the monotonic transformation}
\label{sec:dplp:neural-f}

We approximate the required nonlinear function $\tf$ with  $\tf_{\thetab}$, implemented as a monotonic neural network with parameters~$\thetab$. Initial experiments suggested that raising the raw score $s_{\Acal}(u,v)$ to a power $a>0$ provided a flexible trade-off between privacy and utility over diverse graphs.  We provide a basis set $\{a_i>0\}$ of fixed powers and obtain a positive linear combination over raw scores $s_{\Acal}(u,\cdot)$ raised to these powers.  More specifically, we compute:
 \begin{align}
    \nu_{\betab}(s_{\Acal}(u,v)) = \sum_{i=1} ^ {n_{a}} e^{\tau \beta_i} (s_{\Acal}(u,v))^{a_i},\label{eq:dplp-linear}
\end{align}
where $(a_i)_{i\in[n_a]}>0$ are fixed a~priori.
$\betab=(\beta_i)_{i\in [n_a]}$ are trainable parameters and $\tau$ is a temperature hyperparameter.  This gave more stable gradient updates than letting $a$ float as a variable.

While using $\nu_{\betab}(s_\Acal(u,v))$ as $f(s_\Acal(u,v))$ is already an improvement upon raw score $s_\Acal(u,v)$, we propose a further refinement: we  feed $\nu_{\betab}(s_{\Acal}(u,v))$ to a Unconstrained Monotone Neural Network (UMNN) \citep{umnn} to finally obtain $\tf_{\thetab}(s_{\Acal}(u,v))$.  UMNN is {parameterized using the integral of a positive nonlinear~function:}
\begin{align}
   \tf_{\thetab}(s_{\Acal}(u,v)) = \int_0 ^{ \nu_{\betab}(s_{\Acal}(u,v))} g_{\bm{\phi}}(s)ds + b_0, ~\label{eq:umnn}
\end{align}
where $g_{\bm{\phi}}(.)$ is a positive neural network, parameterized by~$\bm{\phi}$.  UMNN offers an unconstrained, highly expressive parameterization of monotonic functions, since the monotonicity herein is achieved by only enforcing a positive integrand $g_{\bm{\phi}}(\cdot)$.  In theory, with sufficient training and capacity, a `universal' monotone transformation may absorb $\nu_\beta(\cdot)$ into its input stage, but, in practice, we found that omitting $\nu_\beta(\cdot)$ and passing $s_\Acal(u,v)$ directly to UMNN gave poorer results (Appendix~\ref{sec:app:add}).
Therefore, we explicitly provide $\nu_{\betab}(s_\Acal(u,v))$ as an input to it, instead of the score $s_{\Acal}(u,v)$ itself. 
Overall, we have $\thetab=\{\betab,\bm{\phi},b_0\}$ as the trainable parameters.

\section{Privacy-accuracy trade-off of \our}
\label{sec:Qual}

It is desirable that a privacy preserving LP algorithm should be able to hide the protected node-pairs as well as give
high predictive accuracy. However, in principle, LP accuracy is likely to deteriorate with higher privacy constraint (lower $\epsilon_p$).
For example, a uniform ranking distribution, \ie, $\epsilon_p{\to} 0$ provides extreme privacy. However, it would lead to a poor prediction.  On the other hand, in absence of privacy, \ie, when $\epsilon_p\to\infty$, the algorithm enjoys an unrestricted use of the protected pairs, which  is likely to boost its accuracy. 
In this section, we analyze this trade-off from two perspectives.
\begin{enumerate}
    \item \textbf{Relative trade-off analysis.} Here, we assess the \emph{relative} loss of prediction quality of
    $\App$ with respect to the observed scores of the the base LP algorithm $\Acal$. 
    \item \textbf{Absolute trade-off analysis.} Here, we assess the loss of \emph{absolute} prediction quality of $\App$ with respect to the true utility provided by a latent graph generative process.
\end{enumerate}

\subsection{Relative trade-off analysis}

Like other privacy preserving algorithms, \our\ also introduces some randomness to the base LP scores, and therefore, any \our\ algorithm $\App$ may reduce the predictive quality of~$\Acal$.
To formally investigate the loss, we quantify the loss in scores
suffered by $\App$, compared to $\Acal$:
\begin{align}
  \util= \EE_{\App}\bigg[
  \sum_{i\in[K]} s_{\Acal} (u,\uaa_i|\Gcal) - \sum_{i\in[K]} s_{\Acal} (u,\upv _i|\Gcal)
  \bigg]\nonumber.
\end{align}
Recall  that $u^\Acal _i = \pi^\Acal_u (i)$ and $u^\App _i = \pi^\App _u (i)$.
We do not take absolute difference because the first term cannot be
smaller than the second.  The expectation is over randomness introduced
by the algorithm $\App$. The following theorem bounds $\util$ in two extreme cases (Proven in Appendix~\ref{sec:main-thm-proof}).
\begin{proposition}
\label{prop:relative-tradeoff}
If $\epsilon_p$ is the privacy parameter and $\kappa_{u,\Acal}:=\max_{i\in [|\Vcal|-1]} \set{s_{\Acal}(u,\uaa_i)- s_{\Acal}(u,\uaa_{i+1})}$, then we have:
   \begin{align}
       &\lim_{\epsilon_p\to \infty}\util = 0  \text{ and, }
        \lim_{\epsilon_p\to 0 }\util \le   K \kappa_{u,\Acal}. \nn
   \end{align}
\end{proposition}
\noindent Here, $\kappa_{u,\Acal}$ gives maximum difference between the scores of two consecutive nodes in the ranking provided by $\Acal$
to $u$.

\subsection{Absolute trade-off analysis}

Usually, $\Acal$ provides only imperfect predictions.
Hence, the above trade-off analysis which probes into the relative cost of privacy with respect to $\Acal$, may not always reflect the real effect of privacy on the predictive quality. To fill this gap, we need to analyze the \emph{absolute} prediction quality --- utility in terms of the true latent generative process --- which is harder to analyze.  Even for a non-private LP algorithms $\Acal$, absolute quality is rarely analyzed, except for \citet{SarkarCM2011LPembed}, which we discuss below.

\paragraph{Latent space network model}
We consider the latent space model proposed by \citet{SarkarCM2011LPembed}, which showed  
why some popular LP methods succeed.  In their model, the nodes lie within a $D$-dimensional hypersphere having unit volume.
Each node $u$ has a co-ordinate $\xb_u$ within the sphere, drawn uniformly at random.
Given a parameter $r>0$, the latent generative rule is that nodes $u$ and $v$ get connected if the distance
$d_{uv}=||\xb_u-\xb_v||_2 < r$.


\paragraph{Relating ranking loss to absolute prediction error}
If an oracle link predictor could know the underlying generative process, then its ranking $\pi^* _u$ would have sorted the nodes in the increasing order of distances $d_{u\bullet}$. An imperfect non-private LP algorithm $\Acal$ does not have access to these distances and therefore, it would suffer from some prediction error. 
$\App$, based on $\Acal$, will incur an additional prediction error due to its privacy preserving randomization.
We quantify this absolute loss by extending our loss function in Eq.~\eqref{eq:opt-problem}:
\begin{align}
\rlp: =
\textstyle{\sum_{{i<j\le K  }} [d_{uu^\App _i} - d_{uu^\App _j} ]_{+}.}
\end{align}
Here, recall that $u^\App _i = \pi^\App_u (i)$.
Analyzing the above loss for any general base LP method, especially deep embedding methods like GCN, is extremely challenging. Here we consider simple triad based models \eg, Adamic Adar (AA), Jaccard coefficients (JC) and common neighbors (CN). For such base LP methods,  we bound this loss in the following theorem (proven in Appendix~\ref{sec:main-thm-proof}).
\begin{theorem}\label{thm:util-main-thm-dp}
   Given $\epsilon=\sqrt{\frac {2\log (2/\delta)}{|\Vcal|}}+\frac{7 \log (2/\delta)}{3(|\Vcal|-1)}$. For $\Acal\in\set{\tAA,\tCN,\tJC}$, with probability $1-4K^2\delta$:
  \begin{align}
      \EE_{{\App}}\left[\rlp\right]= O\left(\big[2K\epsilon+ {\util }/{|\Vcal|}\big]^{\frac{1}{KD}} \right), \label{eq:abs-bound}
  \end{align}
 where $D$ is the dimension of the underlying hypersphere for the latent space random graph model.
\end{theorem}
The $2K\epsilon$ term captures the predictive loss due to imperfect LP algorithm $\Acal$ and $\util/|\Vcal|$ characterizes the excess loss from privacy constraints.  Note that $\epsilon$ here is unrelated to privacy level~$\epsilon_p$; it characterizes the upper bounds of the expected losses due to $\Acal$, and it depends on $\delta$ that quantifies the underlying confidence interval. The expectation is taken over random choices made by \dpp, and not randomness in the process generating the data.

\section{Experiments}
\label{sec:Expt}

\begin{table*}[t]
\centering
\resizebox{0.91\textwidth}{!}{
\begin{tabular}{p{1.3cm}||l|ccccc|c|ccccc}
\hline
& \multicolumn{6}{c|}{\textbf{AA}} & \multicolumn{6}{c}{\textbf{CN}} \\
&$\Acal$  & \our & {\our-\text{Lin}} & Staircase  & Lapl. & Exp. 
&$\Acal$  & \our & {\our-\text{Lin}} & Staircase  & Lapl. & Exp.  \\ \hline\hline
 

Facebook  & 0.842       
& \textbf{0.788}        &0.211  &0.163  &0.168  &0.170 & 0.827
& \textbf{0.768}        &0.544  &0.169  &0.181  &0.177
\\ \hline
USAir    & 0.899
& \textbf{0.825}        &0.498  &0.434  &0.461  &0.423 & 0.873
& \textbf{0.819}        &0.670  &0.432  &0.482  &0.435
\\ \hline
Twitter   & 0.726     
& \textbf{0.674}        &0.517  &0.504  &0.515  &0.488 & 0.718
& \textbf{0.667}        &0.630  &0.505  &0.523  &0.493
\\ \hline
Yeast  & 0.778  
& \textbf{0.696}        &0.179  &0.154  &0.154  &0.149 & 0.764
& \textbf{0.667}        &0.359  &0.146  &0.160  &0.151
\\ \hline
PB     & 0.627 
& \textbf{0.558}        &0.287  &0.254  &0.263  &0.255 & 0.593
& \textbf{0.537}        &0.389  &0.277  &0.272  &0.257
\\ \hline
& \multicolumn{6}{c|}{\textbf{GCN}} & \multicolumn{6}{c}{\textbf{Node2Vec}} \\
&$\Acal$  & \our & {\our-\text{Lin}} & Staircase  & Lapl. & Exp. 
&$\Acal$  & \our & {\our-\text{Lin}} & Staircase  & Lapl. & Exp.  \\ \hline\hline

Facebook    & 0.463     
& {0.151}        &0.154  &\textbf{0.159}  &0.145  &0.150 & 0.681
& \textbf{0.668}        &0.202  &0.157  &0.152  &0.155
\\ \hline
USAir    & 0.483
& \textbf{0.446}        &0.391  &0.389  &0.399  &0.420 & 0.752
& \textbf{0.681}        &0.424  &0.403  &0.422  &0.447
\\ \hline
Twitter  & 0.504
&  {0.496}        &0.498  &\textbf{0.516}  &0.509  &0.505 & 0.608
& \textbf{0.555}        &0.511  &0.485  &0.512  &0.505
\\ \hline
Yeast    & 0.369 
& \textbf{0.163}        &0.131  &0.116  &0.125  &0.112 & 0.836
& \textbf{0.803}        &0.180  &0.151  &0.159  &0.145
\\ \hline
PB & 0.503      
& {0.287}        &\textbf{0.293}  &0.262  &0.286  &0.261 & 0.525
& \textbf{0.473}        &0.304  &0.266  &0.282  &0.255
\\ \hline
\end{tabular} }
\caption{Performance (AUC)
for base non-private LP algorithm $\Acal$ and five different DP protocols $\App$, viz., \our, \our-Lin,
Staircase~\cite{geng2015optimal},
Laplace~\cite{MachKDS2011AccuratePrivate} and 
Exponential~\cite{MachKDS2011AccuratePrivate}
on 20\% held-out set.  Four base LP algorithms $\Acal$ \ie, \tAA, \tCN, \text{GCN}, \text{Node2Vec}, across all five datasets.
We set  the fraction of protected pairs $\sprot=0.3$, 
the privacy leakage $\epsilon_p=0.1$ 
and the number of recommended nodes $K=30$. 
Baseline LP algorithms AA and CN are triad-based.
Baselines GCN and Node2Vec are based on node embeddings.
Bold numbers correspond to the best performing $\App$, for the given base algorithm~$\Acal$.
\our\ outperforms all other candidates of $\App$ for all LP algorithms $\Acal$, except for GCN, in which, Staircase 
and \our-Lin also outperform \our\ in  several datasets. 
}
\label{tab:FixEpsFindMap}
\end{table*}

We report on experiments with five
real world datasets to show that \our\ can trade off privacy and
the predictive accuracy more effectively than three standard
DP protocols~\citep{geng2015optimal,MachKDS2011AccuratePrivate,McSherryT2007Mechanism}.

\subsection{Experimental setup}
\paragraph{Datasets}
We use five networks:
Facebook~\cite{leskovec2012learning},
USAir~\cite{usair},
Twitter~\cite{leskovec2012learning},
Yeast~\cite{von2002comparative},
PB~\cite{ackland2005mapping}.
Appendix~\ref{sec:app:expt-details} gives further details about them.  

\paragraph{Evaluation protocol and metrics} 
For each of these datasets, we first mark a portion ($\sprot$) of all connections to be protected.
Following previous work~\cite{BackstromL2011SRW, de2013discriminative}, we sort nodes by decreasing number of triangles in which they occur, and allow the top 80\% to be query nodes.
Next, for each query node $q$ in the disclosed graph, the set of nodes $\Vcal\backslash \{q\}$ is partitioned
into  neighbors $\Ncal(q)$ and non-neighbors ${\nnbr(q)}$. 
Finally, we sample 80\% of $|\Ncal(q)|$ neighbors and  80\% of ${\nnbr(q)}$ non-neighbors and
present the resulting graph $\Gcal_{\text{sampled}}$  to $\Acal$---the non-private base LP, and subsequently to $\App$---the \eprivate\ LP which is built on $\Acal$.
%
For each query node $q$, we ask $\App$ to provide a top-$K$ list
of potential neighbors
from the held-out graph, consisting of both public and private connections.
We report the predictive performance of $\App$ in terms of average AUC of the ranked list of nodes across all the queries $Q$. 
The exact choices of $\sprot$ and $K$ vary across different experiments.

\paragraph{Candidates for base LP algorithms~$\Acal$}
We consider two classes of base LP algorithms $\Acal$: 
(i)~algorithms based on the triad-completion principle~\cite{LibenNowellK2007LinkPred} viz., Adamic Adar (AA) and Common Neighbors (CN); and, 
(ii)~algorithms based on fitting node embeddings, viz., GCN~\cite{kipf2016semi} and Node2Vec~\cite{grover2016node2vec}. In Appendix~\ref{sec:app:expt-details}, we 
provide the details about implementation of these algorithms.
Moreover, in Appendix~\ref{sec:app:add}, we also present results on a wide variety of LP protocols.

\paragraph{\our\ and baselines ($\App$)}
We compare \our\ against three state-of-the-art perturbation methods \ie,
Staircase~\cite{geng2015optimal},
Laplace~\citep{MachKDS2011AccuratePrivate}
and Exponential~\citep{MachKDS2011AccuratePrivate,McSherryT2007Mechanism},
which maintain differential privacy off-the-shelf.
Moreover, as an ablation, apart from using $\nu_{\betab}(.)$ --- the linear aggregator of different powers of the score --- as
a signal in $\tf_{\thetab}(.)$ (Eqs.~\ref{eq:dplp-linear} and ~\ref{eq:umnn}), 
we also use it as an independent baseline \ie,  $\tf_{\thetab}(s)=\nu_{\betab}(s)$ with $\thetab=\{\betab\}$.  We refer such a linear ``monotone transformation'' as \our-Lin. In Appendix~\ref{sec:app:add}, as another ablation, we consider another variant of \our, which directly uses the scores $s_{\Acal}(u,v)$ as inputs to UMNN (Eq.~\ref{eq:umnn}), instead of its linear powers $\nu_{\betab}(s_{\Acal}(u,v))$.
Appendix~\ref{sec:app:expt-details} provides the implementation details about \our\ and baselines. 
%
%
\begin{figure}[t]
\centering
 \includegraphics[width=0.3\textwidth]{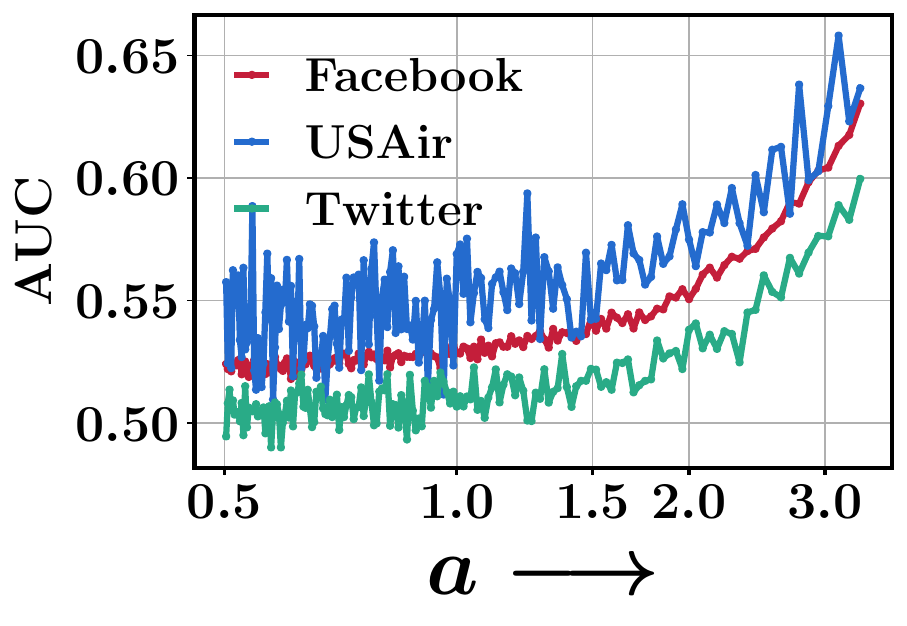}
 \caption{  AUC vs. $a$. $f(s_{\Acal}(u,v))$ $= (s_{\Acal}(u,v))^a$ and the base LP algorithm is \emph{common neighbors} \ie, $\Acal = \tCN$.}  \label{fig:auc-cn}
 \end{figure}
\subsection{Results} 

We first present some empirical evidence which motivates our formulation of $\tf$, specifically, the choice of the 
component $\nu_{\betab}(s_{\Acal}(\cdot,\cdot))$.  We draw nodes using exponential mechanism with different powers of the scores. To recommend neighbors to node $u$, we draw nodes from the distribution in Eq.~\eqref{eq:exp-mech-iterative},
with $\tf(s_{\Acal}(u,\cdot))=(s_{\Acal}(u,\cdot))^{a}$ for different values of~$a$.
Figure~\ref{fig:auc-cn} illustrates the results for three datasets with the fraction of 
protected connections $\sprot=0.3$, the privacy leakage $\epsilon_p=0.1$ and $\Acal=\tCN$.
It shows that AUC improves as $a$ increases.
%
%
This is because, as $a$ increases, 
the underlying sampling distribution induced by the exponential mechanism with new score $\tf(s_\tCN)=s_\tCN ^a$
becomes more and more skewed towards the ranking induced by the base scores $s_{\Acal}$, while maintaining privacy level~$\epsilon_p$.  It turns out that use of UMNN in Eq.~\eqref{eq:umnn} further significantly boosts predictive performance of \our.

We compare \our{} against three
state-of-the-art DP protocols (Staircase, Laplace and exponential), at a given privacy leakage $\epsilon_p=0.1$ and a given fraction of protected edges~$\sprot=0.3$, for top-$K$ (=30) predictions.
\tablename~\ref{tab:FixEpsFindMap} summarizes the results,
which shows that \our{} outperforms competing DP protocols, including its linear variant \our-Lin, for $\Acal=\tAA, \tCN$ and Node2Vec.
There is no consistent winner in case of GCN, which is probably because GCN turns out be not-so-good link predictor in our datasets.

Next, we study how AUC of all algorithms vary as the fraction of protected pairs $\sprot$ and the permissible privacy leakage level $\epsilon_p$ change.  Figure~\ref{fig:var} summarizes the results for Yeast dataset, which shows that our method outperforms the baselines  for almost all values of $\sprot$ and $\epsilon_p$.
\begin{figure}[ht]
  \hspace{-2mm}   \centering
	\subfloat{\includegraphics[width=0.30\textwidth]{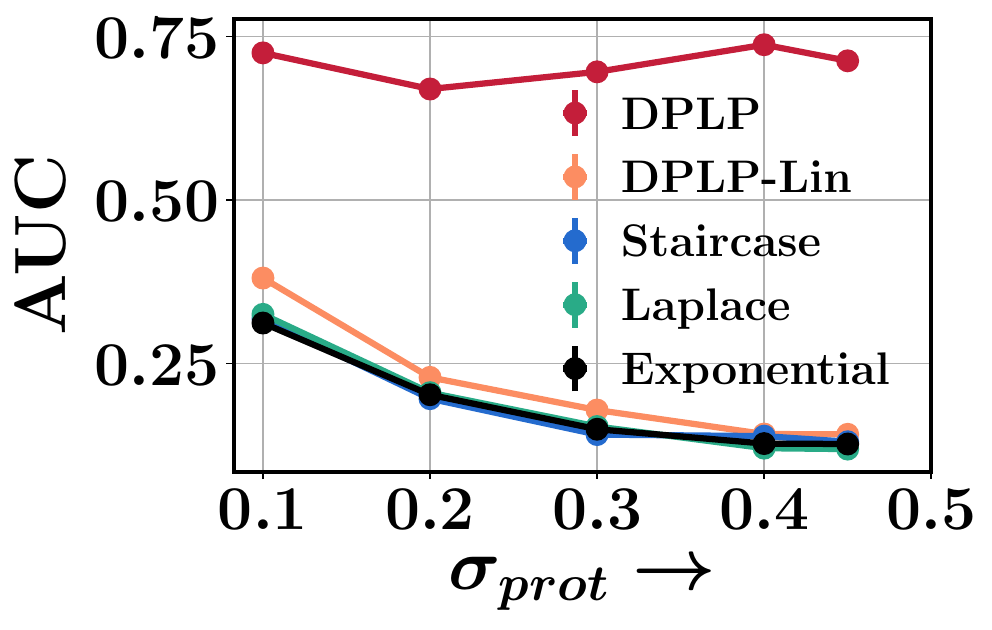}} \hspace{1mm}
	\subfloat{\includegraphics[width=0.28\textwidth]{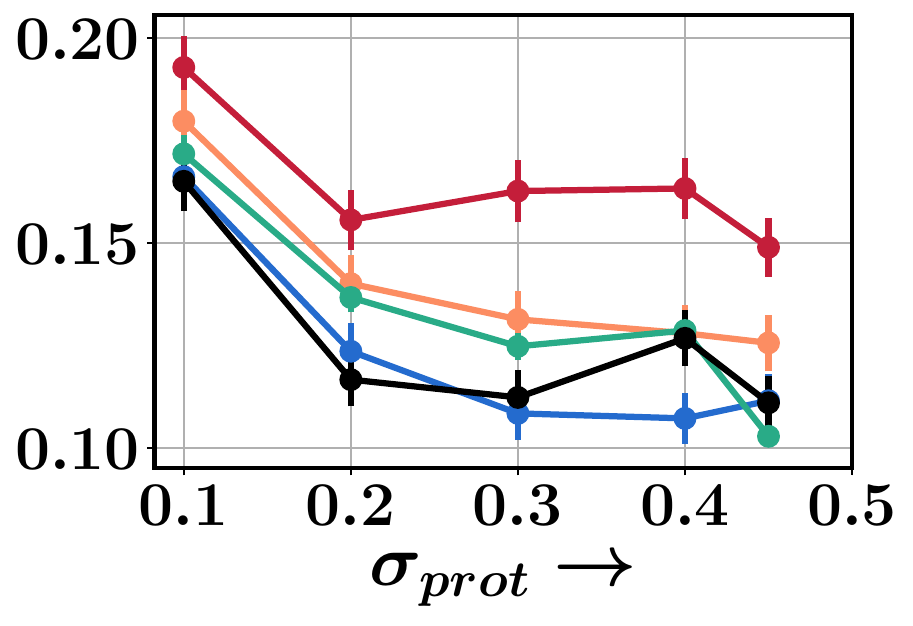}}   \\[-0.1ex]
  \hspace{-4mm}	\centering \subfloat[$\Acal=\tAA$]{\setcounter{subfigure}{1} \includegraphics[width=0.30\textwidth]{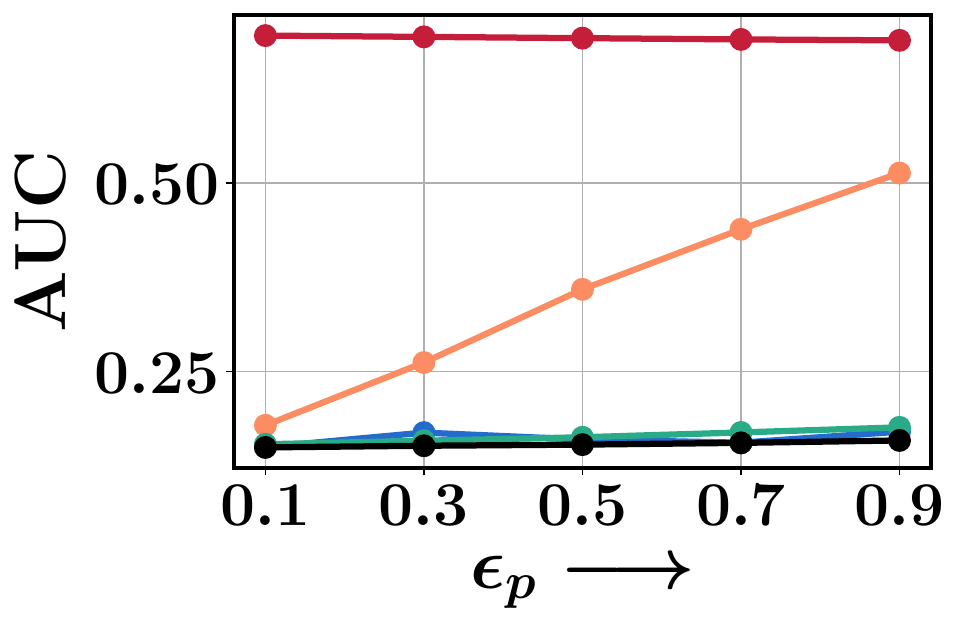}}\hspace{1mm}
	\subfloat[$\Acal=\text{GCN}$]{\includegraphics[width=0.28\textwidth]{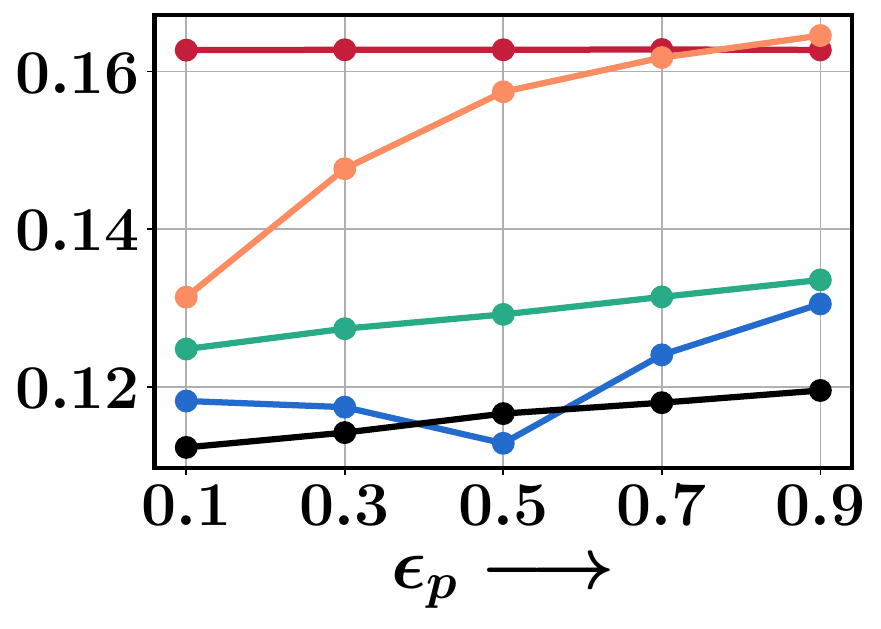}}
\caption{Variation of AUC with $\sprot$ (top row) and with $\epsilon_p$ (bottom row) for all candidates of $\App$
with $\Acal=\tAA$ and $\Acal=\text{GCN}$ for Yeast dataset. For top row, we set $\epsilon_p=0.1$ and for bottom row, we set $\sprot=0.3$.  
} 
\label{fig:var}
\end{figure}

The AUC achieved by \our\ is strikingly stable across a wide variation these parameters. This apparently contradicts the natural notion of utility-privacy tradeoff: increasing utility should decrease privacy.  The apparent conundrum is resolved when we recognize that $\tf_{\thetab}$ adapts to each level of permissible privacy leakage, providing different optimal $\tf$s.  In contrast, Laplace and Exponential mechanisms use the same score values for different $\epsilon_p$.  Therefore, as $\epsilon_p$ decreases, the underlying perturbation mechanism moves quickly toward the uniform distribution.
While both Staircase and \our-Lin aim to find optimal noise distributions, Staircase is not optimized under ranking utility and \our-Lin is constrained by its limited expressiveness. Therefore, they fare poorly.

\section{Conclusion}
\label{sec:End}

We proposed \eprivacy, a practically motivated variant of differential privacy for social networks.  Then we presented \our, a noising protocol designed around a base non-private LP algorithm. \our{} maximizes ranking accuracy, while maintaining a specified privacy level of protected pairs. It transforms the base score using a trainable monotone neural network and then introduces noise into these transformed scores to perturb the ranking distribution.  \our\ is trained using a pairwise ranking loss function.
We also analyzed the loss of ranking quality in a latent distance graph generative framework. Extensive experiments show that \our\ trades off ranking accuracy and privacy better than several baselines.
Our work opens up several interesting directions for future work. 
E.g., one may investigate collusion between queries and graph
steganography attacks~\citep{BackstromDK2007GraphSteganography}.
One may also analyze utility-privacy trade-off in
other graph models \eg, Barab{\'a}si-Albert graphs~\cite{barabasi1999emergence}, Kronecker graphs~\cite{leskovec2010kronecker}, etc.

\section{Broader Impact}
\label{sec:Broader}
Privacy concerns
have increased with the proliferation of social media. 
Today, law enforcement authorities routinely demand
social media handles of visa applicants~\cite{impacta}.  
Insurance providers can use social media content to set premium levels~\cite{impactb}.
One may avoid such situations, by making individual features \eg, 
demographic information, age, etc. private.
However, homophily and other
network effects may still leak user attributes, even if they are made explicitly private.
Someone having skydiver or smoker friends may end up paying large insurance
premiums, even if they do not subject themselves to those risks. A potential remedy is for users to mark some part of
their connections as private as well~\citep{waniek2019hide}.
In our work, we argue that it is possible to protect privacy of such protected connections,
for a specific graph based application, \ie, link prediction, while not sacrificing the quality of prediction.
More broadly, our work suggests that, a suitably designed \dpp\ algorithm can still meet the user expectation
in social network applications, without leaking user information.

\section*{Acknowledgement} Both the authors would like to acknowledge IBM Grants. Abir De would like to acknowledge Seed Grant provided by IIT Bombay.

\setlength{\bibsep}{0pt plus 0.1ex}
\bibliographystyle{abbrvnat}
\bibliography{refs,voila}

\newpage

\begin{center}
\bfseries \Large \papertitle \\[.5ex]
\Large (Appendix)
\normalsize 
\end{center}

\section*{Roadmap to different parts of appendix}
\begin{itemize}
    \item Appendix~\ref{app:backgroundAppen} contains formal definitions of node differential privacy, edge differential privacy and \eprivacy\ in a broader context.
    \item Appendix~\ref{sec:connectionAppen} connects  protected-node-pair differential privacy with other variants of differential privacy.
    \item Appendix~\ref{sec:appen:dpone:proof} contains proof of Proposition~\ref{thm:dpone}, \ie, the privacy guarantee of Algorithm~\ref{alg:dpone}.
    \item Appendix~\ref{sec:main-thm-proof} shows the proofs of the technical results presented in Section~\ref{sec:Qual}. More specifically, here we bound the relative (Proposition~\ref{prop:relative-tradeoff}) and absolute (Theorem~\ref{thm:util-main-thm-dp}) trade-off between privacy and LP accuracy.
    \item Appendix~\ref{sec:AuxLemmas} provides auxiliary theoretical results, required to prove the results of Appendix~\ref{sec:main-thm-proof}.
    \item Appendix~\ref{sec:app:expt-details} provides additional experimental details. 
    \item Appendix~\ref{sec:app:add} provides additional experiments.
\end{itemize}
\vspace{1cm}
\begin{figure}[h!]
    \centering
    \includegraphics[width=0.7\textwidth]{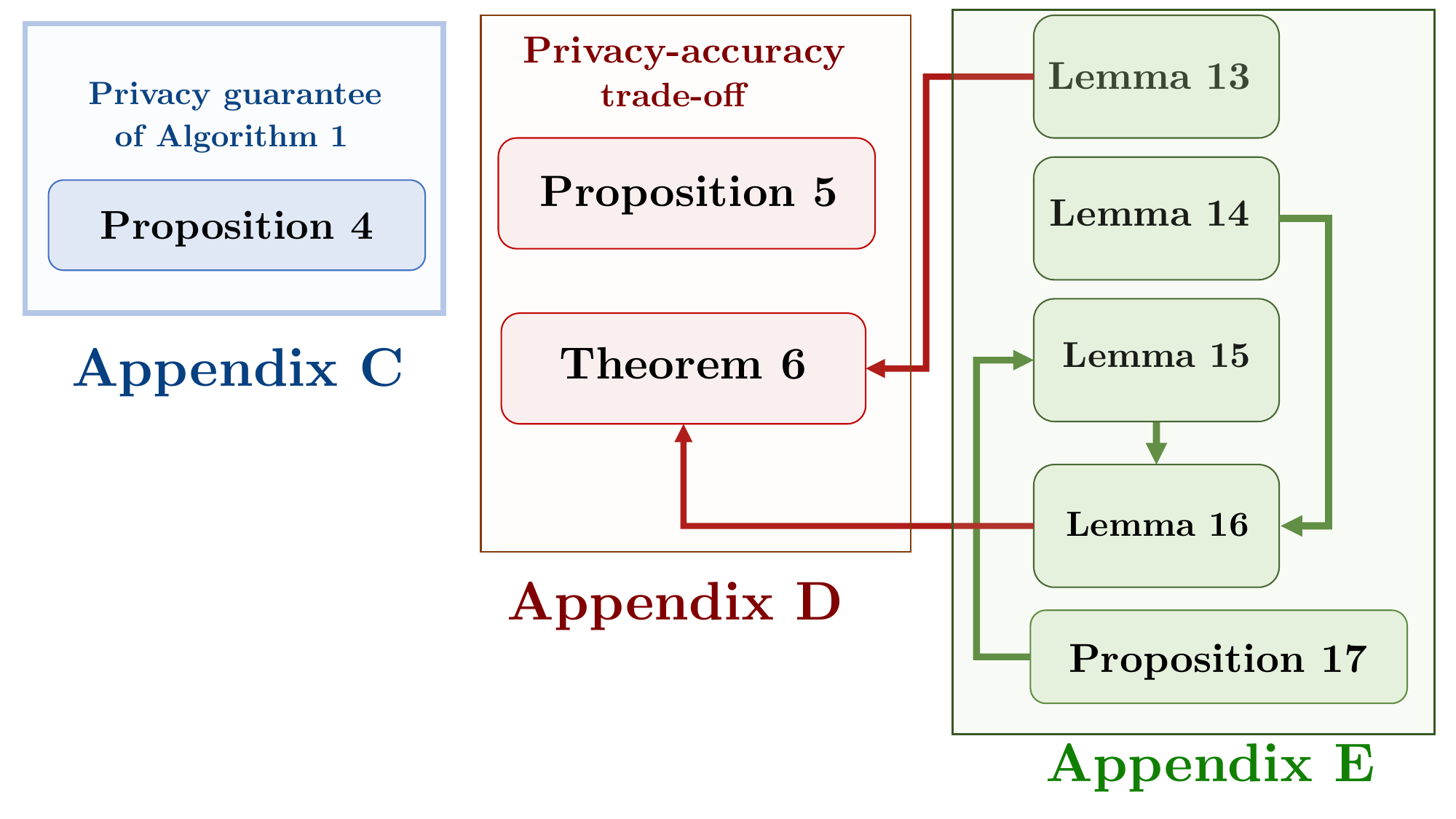}
    \caption{Inter-dependence between the theoretical results presented in Appendix. Appendix C proves privacy guarantee of Algorithm~\ref{alg:dpone}.
    Appendix D contains proofs on privacy-utility trade-off. Appendix E contains auxiliary results. \boxed{A}$\to$\boxed{B} means that $A$ is used to prove $B$.}
    \label{fig:theoretical-dag}
\end{figure}
\newpage
\section{Formal definition of node differential privacy, edge differential privacy,\\
\eprivacy}\label{app:backgroundAppen}
Any algorithm $\Mcal$ on a graph is $\epsilon_p$ differentially private~\citep{MachKDS2011AccuratePrivate},
if, we have
${\Pr(\Mcal(\Gcal) \in \Ocal)} \le \exp(\epsilon_p |\Gcal \oplus \Gcal'| ) {\Pr(\Mcal(\Gcal') \in \Ocal)} $
where $\Gcal'$ (called as a neighbor graph of $\Gcal$)
is obtained by perturbing $\Gcal$.   $|\Gcal \oplus \Gcal'|$
measures the extent of perturbation between $\Gcal$ and $\Gcal'$.
In order to apply DP on graphs, we need to
quantify the granularity of such a perturbation, or in other words, define
the neighborhood criteria of two graphs $\Gcal$ and $\Gcal'$.
Existing work predominantly considers two types of DP formulations on graphs: (i)~edge differential privacy (edge DP), where $\Gcal$ and $\Gcal'$ differ by one edge~\citep{nissim2007smooth,hay2009accurate} and, (ii)~node differential privacy (node DP), where $\Gcal$ and $\Gcal'$ differ by a node and all its incident edges~\citep{borgs2015private,kasiviswanathan2013analyzing,song2018differentially}.
%
%

\begin{definition}[\bfseries Node DP]
Two graphs $\mathcal{G}=(\Vcal,\Ecal)$ and $\mathcal{G}'=(\Vcal,\Ecal')$ with the same node set $\Vcal$ are \emph{neighboring} if $\Ecal\ne\Ecal'$ and there exists a node $u$ such that $\Ecal\cp \{(u,v)|v\in\Vcal\cp\{u\}\}=\Ecal'\cp \{(u,v)|v\in\Vcal\cp\{u\}\}$. Then an algorithm $\Mcal$ is $\epsilon_p$ node DP if, for any two neighboring graphs $\Gcal,\Gcal'$, we have:
\begin{align}
    \Pr(\Mcal(\Gcal)\in \mathcal{O})\le e^{\epsilon_p} \Pr(\Mcal(\Gcal')\in \mathcal{O})
\end{align}
\end{definition}

\begin{definition}[\bfseries  Edge DP]
Two graphs $\mathcal{G}=(\Vcal,\Ecal)$ and $\mathcal{G}'=(\Vcal,\Ecal')$ with the same node set $\Vcal$ are \emph{neighboring} if $\Ecal\ne\Ecal'$ and there exists a node pair $(u,v)$, such that $\Ecal\cp \{(u,v)\}=\Ecal'\cp \{(u,v)\}$. Then an algorithm $\Mcal$ is $\epsilon_p$ edge DP if, for any two neighboring graphs $\Gcal,\Gcal'$ we have:
\begin{align}
    \Pr(\Mcal(\Gcal)\in \mathcal{O})\le e^{\epsilon_p} \Pr(\Mcal(\Gcal')\in \mathcal{O})
\end{align}
\end{definition}

\begin{definition}[\bfseries Protected-node-pair DP] \label{def:epp-appen}
We are given  a randomized algorithm $\Mcal$ which operates on each node $u\in\Vcal$.  We say that ``$\Mcal$ ensures $\epsilon_p$-\eprivacy'', if, for all nodes $u\in\Vcal$ we have 
\begin{align}
\Pr(\Mcal(\Gcal,u)\in \Ocal ) \le e^{\epsilon_p} \Pr(\Mcal(\Gcal',u)\in \Ocal),
\end{align}
whenever $\Gcal$ and $\Gcal'$ are $u$-neighboring graphs (See Definition~\ref{def:uNbrGraphs}).  
\end{definition}

Neighboring graphs $\Gcal,\Gcal'$ have $|\Gcal \oplus \Gcal'|=1$ for node~DP, edge~DP and Protected-node-pair DP. \todo{@ad check}



\section{Connection of  protected-node-pair differential privacy with other variants of differential privacy}\label{sec:connectionAppen}
In the following, we relate our proposed privacy model on graphs with three variants of differential privacy \eg, group privacy, Pufferfish privacy and Blowfish privacy.

\paragraph{Connection with group privacy}
Our privacy proposal can be cast as an instance of group privacy~\cite{dwork2011differential,song2017pufferfish}, where the goal is to conceal the information of protected groups.

\begin{definition}[\bfseries Group privacy]
Given a database $\Dcal$ and a collection of subsets containing sensitive information: 
$\textsc{Priv} = \{P_1, P_2,\cdots,P_k \}$
with  $P_i\subset \Dcal$. 
An algorithm $\Mcal$ is $\epsilon_p$ group \dpp\ w.r.t $\textsc{Priv}$ if for all $P_i$ in $\textsc{Priv}$ for all pairs of $\Dcal$ and $\Dcal'$ which differ in the entries in $P_i$ \ie, $\Dcal\cp P_i = \Dcal'\cp P_i$, we have:
\begin{align}
    \Pr(\Mcal(\mathcal{D})\in \mathcal{O})\le e^{\epsilon_p} \Pr(\Mcal(\mathcal{D}')\in \mathcal{O}) \ \forall \ \mathcal{O}\in \text{Range}(\Mcal) 
\end{align}
\end{definition}
Note that in the context of our current problem, if we have $\Dcal=\Gcal, \Dcal'=\Gcal'$, with $G$ and $G'$ being $u$-neighboring graphs; $P_w=\set{(v,w) | v \in \Pcal(w)\cp u}$; and, $\Mcal$ to be a randomized LP algorithm   which operates on $u$, then, Definition~\ref{def:epp-appen} shows that our privacy model closely related group privacy. However, note that since the sensitive information differs across different nodes, the set $\textsc{Priv}$ also varies across different nodes.

\paragraph{Connection with Pufferfish privacy}
Our privacy proposal is also connected with Pufferfish privacy~\cite{kifer2014pufferfish,song2017pufferfish}. In Pufferfish privacy, we define three parameters: $\mathcal{S}$ which contains the sensitive information; $\Qcal \subseteq \Scal\times\Scal$ which contains the set of pairs, that we want to be indistinguishable; and, $\Gamma$ a set data distributions. In terms of these parameters, we define Pufferfish privacy as follows.

\begin{definition}[\bfseries Pufferfish privacy~\citep{kifer2014pufferfish,song2017pufferfish}]
Given the tuple \todo{where is $\Qcal$ used?} $(\Scal, \Qcal, \Gamma)$, an algorithm $\Mcal$ is said to be $\epsilon_p$ Pufferfish private if for all $\zeta\in\Gamma$ with $\Dcal$ drawn from distribution $\zeta$, for all $(a_i,a_j)\in \Qcal$, we have:
\begin{align}
\log \left|    \dfrac{\PP(\Mcal(\Dcal)\in\Ocal \,|\, a_i,\zeta)}{\PP(\Mcal(\Dcal) \in\Ocal \,|\, a_j,\zeta)} \right| < \epsilon_p  \ \forall\ \Ocal\ \in \text{Range}(\Mcal)\ \text{ and } \PP(a_i|\zeta)\neq 0, \ \PP(a_j|\zeta)\neq 0.
\end{align}
\end{definition}
In the context of social networks, where users have marked private edges, one can naturally model $\Scal$ to be the set of protected node-pairs \ie, $\Scal=\left\{(u,u')\,|\, u'\in\Pcal(u)\right\}$ and $\Qcal=\left\{\left(\indicator{u'\in \Ncal(u)}, \indicator{u'\in \nbu} \right) \right\}$ and $\gamma$ can incorporate the adversary's belief about the generative model of the graph. 
However, we do not incorporate the data distribution $\zeta$ but instead wish to make our DP algorithm indistinguishable w.r.t.\ the entire possible set $\Pcal(u)$, not just one single instance of presence or absence of an edge. It would be interesting to consider a Pufferfish privacy framework for LP algorithms in graphs, which is left for future work.

\paragraph{Connection with Blowfish privacy}
Blowfish privacy~\cite{he2014blowfish} is another variant of differential privacy, which is motivated by Pufferfish privacy. However, in addition to modeling sensitive information, it characterizes the constraints which may be publicly known about the data. Moreover, it does not directly model any distribution about the data. In contrast to our work, the work in~\cite{he2014blowfish} neither considers a ranking problem in a graph setting nor aims to provide any algorithm to optimize privacy-utility trade off in a graph database.

\section{Privacy guarantee of Algorithm~\ref{alg:dpone}}
\label{sec:appen:dponeX}
%
\label{sec:appen:dpone:proof}
\textbf{Proposition {\ref{thm:dpone}}.}
\emph{$\App$ is $K\epsilon_p$-protected-node-pair differentially private.}
\begin{proof}
 First note that line no. 1--5 in Algorithm~\ref{alg:dpone} do not have any privacy leakage. 
 For the recommendation on the candidate nodes,the derivation of the privacy guarantee is identical to~\cite[Page 43]{DworkR2014AlgoDP}.  However, it is reproduced here for completeness. Let $u_j=\Rcal_K(j)$ for the graph $\Gcal$ and $u'_j=\Rcal_K(j)$ for the graph. Moreover, we denote that $\Rcal_K(1...j-1)=[u_1,...,u_{j-1}]$ for $\Gcal$ and $\Rcal_K(1...j-1)=[u' _1,...,u' _{j-1}]$ for $\Gcal'$. 
\begin{align}
    \dfrac{\Pr(\pi_u ^\App =\Rcal_K |\Acal,\Gcal)}{\Pr(\pi_u ^\App =\Rcal_K |\Acal,\Gcal')}& =    \dfrac{\displaystyle{\prod_{k=1} ^K} \dfrac{\exp\left(\dfrac{\epsilon_p \tf(s_\Acal(u, u_{k}|\Gcal))}{2\Deltaf}\right)}
       {\displaystyle{\sum_{w \not \in  \Rcal_K(1,...,j-1)}}
         \exp\left(\dfrac{\epsilon_p \tf(s_\Acal(u,w) |\Gcal )}{2\Deltaf}\right)}}{\displaystyle{\prod_{j=1} ^K} \dfrac{\exp\left(\dfrac{\epsilon_p \tf(s_\Acal(u, u' _{j})|\Gcal')}{2\Deltaf}\right)}
       {\displaystyle{  \sum_{w' \not \in \Rcal_K(1,..,j-1)}}
         \exp\left(\dfrac{\epsilon_p \tf(s_\Acal(u,w')|\Gcal' )}{2\Deltaf}\right)}}\nn\\
         & = \prod_{j=1} ^K  \exp\left(\dfrac{\epsilon_p \big(\tf(s_\Acal(u, u_{k}|\Gcal) - \tf(s_\Acal(u,u'_k)|\Gcal')\big)}{2\Deltaf}\right) \times 
          \displaystyle{\prod_{j=1} ^K} \dfrac{ {\displaystyle{\sum_{w' \not\in \Rcal_K(1,..,j-1)}}\exp\left(\dfrac{\epsilon_p  \tf(s_\Acal(u, w'|\Gcal)}{2\Deltaf}\right)} }{\displaystyle{\sum_{w \not \in \Rcal_K(1,..,j-1)}}\exp\left(\dfrac{\epsilon_p  \tf(s_\Acal(u, w|\Gcal))}{2\Deltaf}\right)}\nn\\
         & \le \exp(K\epsilon_p/2). \exp(K\epsilon_p/2) = \exp(K\epsilon_p)
\end{align}
The last inequality follows from the fact that:
$\tf(s_\Acal(u, a)|\Gcal) -\tf(s_\Acal(u, a)|\Gcal') \le \Delta_{f,\Acal}$.
\end{proof}
\section{Proof of technical results in Section~\ref{sec:Qual}}
\label{sec:main-thm-proof}

\subsection{Relative error in score due to Algorithm~\ref{alg:dpone}}
\label{sec:gamma-basic-proof}

\textbf{Proposition 5.}
If $\epsilon_p$ is the privacy parameter and $\kappa_{u,\Acal}:=\max_{i\in [|\Vcal|-1]} \set{s_{\Acal}(u,\uaa_i)- s_{\Acal}(u,\uaa_{i+1})}$, then we have:
   \begin{align}
       &\lim_{\epsilon_p\to \infty}\util = 0  \text{ and, }
        \lim_{\epsilon_p\to 0 }\util \le   K \kappa_{u,\Acal}. \nn
   \end{align}
%
\begin{proof}
Recall that: $\util$ was defined as:
\begin{align}
  \util= \EE_{\App}\Big[\sum_{i\in[K]} (s_{\Acal} (u,\uaa_i|\Gcal) -s_{\Acal} (u,\upv _i|\Gcal))\Big]
\end{align}
We note that:
\begin{align}
& \EE_{\App}\left(s_{\Acal}(u,\uaa_i)-s_{\Acal}(u,u^{\App} _i )\right) \\
 & =\sum_{\Rcal_K{(1,...,i-1)}} \sum_{w\not \in \Rcal_K{(1,...,i-1)}} \left(s_{\Acal}(u,\uaa_i)-s_{\Acal}(u,w )\right) \frac{\exp\left(\dfrac{\epsilon_p \tf(s_{\Acal}(u,w))}{2\Deltaf} \right) }{ \displaystyle{\sum_{v\not\in \Rcal_K{(1,...,i-1)}}}\exp\left(\dfrac{\epsilon_p \tf(s_{\Acal}(u,v))}{2\Deltaf} \right)  }\PP(\Rcal_K{(1,...,i-1)})
 \end{align}
(1) \boxed{\lim_{\epsilon_p\to \infty}\util = 0.} 
 We use induction to prove that $  \lim_{\epsilon_p\to \infty}
  \EE_{\App}\left(s_{\Acal}(u,\uaa_i)-s_{\Acal}(u,u^{\App} _i )\right)=0$ for all $i$. For $i=1$, we have:
\begin{align}
  \lim_{\epsilon_p\to \infty}
  \EE_{\App}\left(s_{\Acal}(u,\uaa_1)-s_{\Acal}(u,u^{\App} _1 )\right) & =  \sum_{w\in\Vcal} \left(s_{\Acal}(u,\uaa_1)-s_{\Acal}(u,w)\right) \mathbf{1}(w=u_1 ^\Acal)  
   =0
\end{align}
Suppose we have $  \lim_{\epsilon_p\to \infty}
  \EE_{\App}\left(s_{\Acal}(u,\uaa_j)-s_{\Acal}(u,u^{\App} _j )\right)=0$ for $j=1,...,i-1$. To prove this for $j=i$, we note that $\Rcal_K(j)= u_j ^\Acal$ for $j\in[i-1]$. Hence, we have:
\begin{align}
 & \lim_{\epsilon_p\to \infty}  \EE_{\App}\left(s_{\Acal}(u,\uaa_i)-s_{\Acal}(u,u^{\App} _i )\right)\nn \\
 & =   \sum_{\Rcal_K{(1,...,i-1)}}
 \sum_{w\not \in \Rcal_K{(1,...,i-1)}}
 \left(s_{\Acal}(u,\uaa_i)-s_{\Acal}(u,w )\right) 
 \lim_{\epsilon_p\to \infty}
 \frac{\exp\left(\dfrac{\epsilon_p \tf(s_{\Acal}(u,w))}{2\Deltaf} \right) }
 { \displaystyle{\sum_{v\not\in \Rcal_K{(1,...,i-1)}}}\exp\left(\dfrac{\epsilon_p \tf(s_{\Acal}(u,v))}{2\Deltaf} \right)  }\PP(\Rcal_K{(1,...,i-1)})\nn\\
& =  \sum_{w\not \in \set{u_j ^\Acal | j\in[i-1]}} \left(s_{\Acal}(u,\uaa_i)-s_{\Acal}(u,w )\right) \lim_{\epsilon_p\to \infty} \frac{\exp\left(\dfrac{\epsilon_p \tf(s_{\Acal}(u,w))}{2\Deltaf} \right) }{ \displaystyle{ \sum_{v\not \in \set{u_i ^\Acal | j\in[i-1]}}}\exp\left(\dfrac{\epsilon_p \tf(s_{\Acal}(u,v))}{2\Deltaf} \right)  } \nn\\
& =  \sum_{w\not \in \set{u_j ^\Acal | j\in[i-1]}} \left(s_{\Acal}(u,\uaa_i)-s_{\Acal}(u,w )\right) \mathbf{1}(w=u_i ^\Acal) \nn\\
  &=0
\end{align}
(2) \boxed{ \lim_{\epsilon_p\to 0 }\util \le   K \kappa_{u,\Acal}.} 
Making $\epsilon_p\to 0$ we have 
\begin{align}
 \lim_{\epsilon_p\to 0}    \EE_{\App}\left(s_{\Acal}(u,\uaa_i)-s_{\Acal}(u,u^{\App} _i )\right)
& \le \sum_{\Rcal_K{(1,...,i-1)}} \sum_{w\not \in \Rcal_K{(1,...,i-1)}} \dfrac{\kappa_{u,\Acal}}{|\Vcal|-i+1}
\PP(\Rcal_K{(1,...,i-1)})\nn\\
& \sum_{\Rcal_K{(1,...,i-1)}} (|\Vcal|-i+1)\dfrac{\kappa_{u,\Acal}}{|\Vcal|-i+1}
\PP(\Rcal_K{(1,...,i-1)})\nn\\
& = \kappa_{u,\Acal}
\end{align}
Hence, 
$\lim_{\epsilon_p\to 0} \util=\sum_{i\in[K]} \EE_{\App}\left(s_{\Acal}(u,\uaa_i)-s_{\Acal}(u,u^{\App} _i )\right) \le
K\kappa_{u,\Acal}$.
\end{proof}

\subsection{Proof of Theorem~\ref{thm:util-main-thm-dp}}
\begin{numtheorem}{\ref{thm:util-main-thm-dp}}
   Given $\epsilon=\sqrt{\frac {2\log (2/\delta)}{|\Vcal|}}+\frac{7 \log (2/\delta)}{3(|\Vcal|-1)}$. For $\Acal\in\set{\tAA,\tCN,\tJC}$, with probability $1-4K^2\delta$:
  \begin{align}
      \EE_{{\App}}\left[\rlp\right]= O\left(\big[2K\epsilon+ {\util }/{|\Vcal|}\big]^{\frac{1}{KD}} \right),\nn
  \end{align}
  \end{numtheorem}
\xhdr{Proof sketch} To prove the theorem, we follow the following steps. First we use Lemma~\ref{lem:lastX} to bound $(d_{u u^\App _t}-  d_{uu^* _t})$ for each $t\in[K]$, where $u^*_t = \pi^* _u(t)$ is the $t$-th node in the ranked list given by the oracle ranking $\pi^* _u$ which sorts the node in the increasing order of the distances $d_{u\bullet}$. Next, we use the bounds obtained in above step and Lemma~\ref{lem:ranking-loss-bound-1} to bound $ \EE_{\App}\left(\rlp\right)$.
\begin{proof}
Here, we show only the case of common neighbors.  Others follow using
the same method.  
We first fix one realization of $\App$ for $\Acal=\tCN$.\\

\noindent First, we define the following quantities: 
\begin{align}
    & \utildpA =\sum_{i\in[K]}\left(s_{\Acal}(u,\uaa_i)-s_{\Acal}(u,\upv_i)\right) \quad \bigg(\text{Hence, }\EE_{\App}(\utildpA) = \util \bigg)\\
    & {\epsilon}_{\App}=\epsilon+ \dfrac{\utildpCN}{2K|\Vcal|}, \\ &\epsilon_{\App,t}=2rt\left(\frac{2t\epsilon_{\App}}{\Omega(r)}\right)^{\frac{1}{tD}} \label{eq:volume-intro}\\
    & S_t=\sum_{i=1} ^t (d_{uu^\App _i}-d_{uu^* _i}). \quad \bigg(\text{Hence, } S_t-S_{t-1} = ( d_{uu^\App _t} -  d_{uu^* _t} )\bigg),\label{eq:delst}
\end{align}
where in Eq.~\eqref{eq:volume-intro}, $\Omega(r)$ indicates the volume of a $D$ dimensional hypersphere with radius $r$;
in Eq.~\eqref{eq:delst}, $u^*_i = \pi^* _u(i)$ is the $i$-th node in the ranked list given by the oracle ranking $\pi^* _u$ which sorts the node in the increasing order of the distances $d_{u\bullet}$.

\noindent \boxed{\text{Bounding } S_t-S_{t-1} = (  d_{u u^\App _t}  - d_{uu^* _t}  )    .}
We note that:
\begin{align}
\PP(|S_t- & S_{t-1}| < \epsilon_{\App,t}) \ge 1-\PP(S_t \ge \epsilon_{\App,t})-\PP(S_{t-1} \ge \epsilon_{\App,t-1}) \quad \big(\text{Due to: } S_t\le \epsilon_{\App,t}\ \forall t \implies -\epsilon_{\App, t-1}<S_t-S_{t-1}<\epsilon_{\App,t}\big) \nn \\
&\ge 1-4t\delta \qquad \text{(Due to Lemma~\ref{lem:lastX})}\label{eq:intermed-1}. 
\end{align}
The above equation shows that:
\begin{align}
&\PP\bigg(\sum_{t\in[K]}|S_t-S_{t-1}|^2 <  K \epsilon^2 _{\App,K}\bigg) 
 \overset{(i)}{\ge} \Pr \bigg(\sum_{t\in[K]}|S_t-S_{t-1}|^2< \sum_{t\in[K]} \epsilon^2 _{\App,t}\bigg) \nn\\
&\hspace{5.2cm} {\ge}  1-\sum_{t\in[K]} \PP \bigg( |S_t-S_{t-1}|^2 < \epsilon^2 _{\App,t}\bigg) \quad \nn\\
&\hspace{5.2cm}{\ge} 1 -2K(K+1)\delta \qquad (\text{Due to Eq.~\eqref{eq:intermed-1}}) \nn\\
&\hspace{5.2cm}{\ge} 1-4K^2\delta. \label{eq:intermed-2}
\end{align}
Inequality (i) is obtained using
$\displaystyle    \epsilon_{\App,t} = 2rt\left(\frac{2t\epsilon_{\App}}{\Omega(r)}\right)^{\frac{1}{tD}} 
   \overset{(a)}{\le }\ 2Kr\left(\frac{2K\epsilon_{\App}}{\Omega(r)}\right)^{\frac{1}{tD}} 
   \overset{(b)}{\le} \ 2Kr\left(\frac{2K\epsilon_{\App}}{\Omega(r)}\right)^{\frac{1}{KD}} = \epsilon_{\App,K}$, where
the first inequality $(a)$ is due to that $t\le K$ and the second inequality $(b)$ is due to that $K\epsilon_{\App} \ll 1$. 

\noindent \boxed{\text{Bounding }\EE_{\App}\left(\rlp\right).}
From~\eqref{eq:intermed-2}, we have the following with probability $1-4K^2\delta$:
 \begin{align}
    \EE_{\App}\left[\textstyle
     \sum_{t\in[K]}|S_t-S_{t-1}|^2\right]& \le     \EE_{\App}(\epsilon^2 _{\App,K})
     \le  \boundfac   \left(\frac{2K\epsilon+ \EE_{\App}(\utildpCN)/|\Vcal|}{\Omega(r)}\right)^{2/KD},
 \end{align}
where $r$ is the threshold distance for an edge
between two nodes in the latent space model. The last inequality is obtained using Jensen inequality on the concave function $f(x) = x^{1/KD}$. Further  applying Jensen inequality on the bounds of Lemma~\ref{lem:ranking-loss-bound-1} gives us:
\begin{align}
     \EE_{\App}\bigg(\rlp\bigg) \le \sqrt{2 {K \choose 2}}\sqrt{ K  \sum_{i\in[K]} \EE_{\App}\left(d_{uu^{\App} _i} - d_{uu^*_i}\right)^2  },
\end{align}
which immediately proves the result.
\end{proof}

  \section{Auxiliary Lemmas}
\label{sec:AuxLemmas}
In this section, we first provide a set of key auxiliary lemmas that will be used to derive several results in the paper. We first provide few definitions.

\begin{definition}
Given the generative graph model with nodes $\Vcal$, the radius of connectivity $r$. 
\begin{enumerate}
\item We define score deviation $\utildpA=\sum_{i\in[K]}(s_{\Acal}(u,\uaa_i)-s_{\Acal}(u,\upv_i)).$
\item We define $\Omega(r)$ as the volume of a $D$ dimensional hypersphere
of radius $r$.
\item We define $\pi^* _u$ as the Oracle ranking which sorts the node in the increasing order of the distances $d_{u\bullet}$. Moreover, we denote $u^*_i = \pi^* _u(i)$ as the $i$-th node in the ranking.
\item Let $u,v$ be nodes in the graph with corresponding point embeddings.  With these points as centers, two $D$-dimensional hyperspheres, each with radius~$r$, are described.  The common volume of the intersection of the two hyperspheres is called $\Area(u,v) = \Area(v,u)$.
\end{enumerate}
\end{definition}

\begin{lemma}\label{lem:ranking-loss-bound-1}
We have:
\begin{align}
    \rlp\le \sqrt{2 {K \choose 2}}\sqrt{ K  \sum_{i\in[K]} \left(d_{uu^{\App} _i} - d_{uu^*_i}\right)^2  }
\end{align}
\end{lemma}
\begin{proof}
Let the nodes be $u^{\App} _1, \ldots, u^{\App} _K, \ldots$ in the ranked list.
We are not concerned with positions after~$K$. 
Let the latent distances from $u$ to these
nodes be $d_{uu^{\App} _1}, d_{uu^{\App} _2}, \ldots, d_{uu^{\App} _K}$.  If $i<j$ then we want
$d_{uu^{\App} _i} < d_{uu^{\App} _j}$, but this may not happen, in which case we assess a loss
of $(d_{uu^{\App} _i} - d_{uu^{\App} _j})$. Hence, we have:
\begin{align}
\rlp & = \sum_{i < j \le K} \left[d_{uu^{\App} _i}-d_{uu^{\App} _j}\right]_+ \nn\\
    & \overset{(i)}{=} \sum_{i < j \le K} \left[d_{uu^{\App} _i}-d_{uu^{\App} _j}\right]
            \indicator{\pi^*(u^\App _i) > \pi^*(u^\App _i)} \nn\\
    &  \overset{(ii)}{\le} \sqrt{K \choose 2}\sqrt{ \sum_{i < j \le K} \left[d_{uu^{\App} _i}-d_{uu^{\App} _j}\right]^2
            \indicator{\pi^*(u^\App _i) \ge \pi^*(u^\App _i)} },         
\end{align}
where, (i) is due to the fact that 
$\indicator{ d_{uu^{\App} _i} > d_{uu^{\App} _j} }
\iff \indicator{ \pi^*_u(u^{\App} _i) > \pi^* _u(u^{\App} _j) }$ and (ii) is due to Cauchy-Schwartz inequality, \ie,
 $|\sum_i x_i \sum_j y_j| \le \sqrt{\sum_i x^2 _i} \sqrt{\sum_j x^2 _j}$.

Now since $i< j$, we note that $d_{uu^* _j}>d_{uu^* _i}$. Therefore, 
$d_{uu^{\App} _i} - d_{uu^{\App} _j} \le d_{uu^{\App} _i} - d_{uu^{\App} _j}
+d_{uu^*_j} - d_{uu^*_i}$. Now if $\pi^* _u(u^{\App} _i) - \pi^* _u(u^{\App} _j) > 0$, then  $d_{uu^{\App} _i} - d_{uu^{\App} _j}>0$.
Hence, we have:
\[|d_{uu^{\App} _i} - d_{uu^{\App} _j}|=d_{uu^{\App} _i} - d_{uu^{\App} _j} \le d_{uu^{\App} _i} - d_{uu^{\App} _j}
+d_{uu^*_j} - d_{uu^*_i} \le |d_{uu^{\App} _i} - d_{uu^* _i}|
+ |d_{uu^{\App}  _j} - d_{uu^*_j}| . \]
Now we have:
\begin{align}
 & \sum_{i<j\le K} (d_{uu^{\App} _i} - d_{uu^{\App} _j})^2
\indicator{ \pi^* _u(u^{\App} _i) - \pi^* _u(u^{\App} _j) > 0}\nn\\
&\le \sum_{i<j\le K} \left(|d_{uu^{\App} _i} - d_{uu^*_i}|
+ |d_{uu^{\App}  _j} - d_{uu^*_j}|\right)^2\nn\\
&\le 2 \sum_{i<j\le K}\left[ \left(d_{uu^{\App} _i} - d_{uu^*_i}\right)^2
+ \left(d_{uu^{\App}  _j} - d_{uu^*_j}\right)^2 \right]\nn\\
& \le 2K \sum_{i\in[K]} \left(d_{uu^{\App} _i} - d_{uu^*_i}\right)^2
\end{align}
Finally, we have:
\begin{align}
    \rlp& \le \sqrt{K \choose 2}\sqrt{ \sum_{i < j \le K} \left[d_{uu^{\App} _i}-d_{uu^{\App} _j}\right]^2
            \indicator{\pi^*(u^\App _i) \ge \pi^*(u^\App _i)} } 
\le  \sqrt{2 {K \choose 2}}\sqrt{ K  \sum_{i\in[K]} \left(d_{uu^{\App} _i} - d_{uu^*_i}\right)^2  }
\end{align}
\end{proof}

\begin{lemma}\label{lem:util-area-dp}
Define $\epsilon=\sqrt{\frac {2\log (2/\delta)}{|\Vcal|}}+\frac{7 \log (2/\delta)}{3(|\Vcal|-1)}$.
Then, with probability at least $1-2K\delta$, we have the following bounds:
\begin{align}
&(i)\ \App\text{ on }\tCN: \ 0\le\sum_{i\in[K]} \Area (u  , u^* _i)-\sum_{i\in[K]} \Area (u,u^{\overline{\tCN}} _i) \le (2K\epsilon+\utildpCN/|\Vcal|),\nn\\
&(ii)\ \App\text{ on }\tAA: \ 0\le\sum_{i\in[K]} \Area (u  , u^* _i)-\sum_{i\in[K]} \Area (u,u^{\overline{\tAA}} _i) \le   \log (|\Vcal|\Omega(r))(2K\epsilon+\utildpAA/|\Vcal|),\nn\\
&(iii)\ \App\text{ on }\tJC: \ 0\le\sum_{i\in[K]} \Area (u  , u^* _i)-\sum_{i\in[K]} \Area (u,u^{\overline{\tJC}} _i) \le  2 \Omega(r)(2K\epsilon+\utildpJC/|\Vcal|)\nn.
\end{align}
Recall that   $\Area(u,v)$ is the common volume
between $D$-dimensional hyperspheres centered around $u$ and $v$.
\end{lemma}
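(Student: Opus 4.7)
\textbf{Proof plan for Lemma~\ref{lem:util-area-dp}.} The strategy is to bridge the latent (unobservable) common volumes $\Area(u,v)$ with the observable scores $s_\Acal(u,v)$ via a concentration argument, and then convert score deficits into area deficits. Throughout I would condition on the latent position of the query node $u$ and treat the positions of the other $|\Vcal|-1$ nodes as i.i.d.\ uniform in the unit-volume hypersphere.

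First, for each candidate heuristic I would establish a pointwise concentration of the form $|s_\Acal(u,v)/c_\Acal - \Area(u,v)| \le \tau_\Acal$, holding with probability $\ge 1-\delta$. For common neighbors, $s_\tCN(u,v) = \sum_{w \notin \{u,v\}} \indicator{d_{uw}<r,\, d_{vw}<r}$ is a sum of i.i.d.\ Bernoulli$(\Area(u,v))$ indicators, so Bernstein's inequality applied to the average yields exactly the $\epsilon$ given in the lemma, with $c_\tCN = |\Vcal|$. For Adamic--Adar I would exploit the fact that each typical degree $|\Ncal(w)|$ concentrates around $|\Vcal|\Omega(r)$ (another Bernstein application), so $1/\log|\Ncal(w)| \approx 1/\log(|\Vcal|\Omega(r))$ on the event that no degree is pathological; this gives $c_\tAA = |\Vcal|/\log(|\Vcal|\Omega(r))$. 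For Jaccard, note that $|\Ncal(u)\cup \Ncal(v)| = |\Ncal(u)|+|\Ncal(v)|-|\Ncal(u)\cap\Ncal(v)|$ concentrates around $|\Vcal|(2\Omega(r)-\Area(u,v)) \approx 2|\Vcal|\Omega(r)$, so $s_\tJC(u,v) \approx \Area(u,v)/(2\Omega(r))$, i.e.\ $c_\tJC = |\Vcal|/(2\Omega(r))$ after a multiplication by $|\Vcal|$.

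Second, I would apply a union bound over at most $2K$ distinguished endpoints, namely $\{u^*_1,\ldots,u^*_K\}\cup\{u^\App_1,\ldots,u^\App_K\}$, to produce a uniform concentration statement with probability $\ge 1-2K\delta$. The lower bound $\sum_i \Area(u,u^*_i) - \sum_i \Area(u,u^\App_i) \ge 0$ is then immediate, since by definition $\{u^*_1,\ldots,u^*_K\}$ maximises $\sum_i \Area(u,\cdot)$ over all size-$K$ subsets of $\nbu$. For the upper bound, the concentration gives
\begin{align*}
\sum_{i\in[K]} \Area(u,u^*_i) - \sum_{i\in[K]} \Area(u,u^\App_i)
\;\le\; \frac{1}{c_\Acal}\sum_{i\in[K]}\bigl(s_\Acal(u,u^*_i)-s_\Acal(u,u^\App_i)\bigr) + 2K\tau_\Acal.
\end{align*}
Since $u^\Acal_i$ enumerates the top-$K$ nodes by $s_\Acal$, we have $\sum_i s_\Acal(u,u^\Acal_i) \ge \sum_i s_\Acal(u,u^*_i)$, so the first term is dominated by $\utildpA/c_\Acal$. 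Substituting the three $(c_\Acal,\tau_\Acal)$ pairs derived in the first step, and absorbing the $1/|\Vcal|$ that appears in $c_\Acal$ into the stated bounds, produces precisely $2K\epsilon + \utildpCN/|\Vcal|$ for CN, its $\log(|\Vcal|\Omega(r))$-inflated version for AA, and its $2\Omega(r)$-inflated version for JC.

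The main obstacle is making the AA and JC concentrations clean enough to yield the exact $\epsilon$ appearing in the lemma, rather than some inflated version. For AA, one must absorb the perturbation coming from degrees being $|\Vcal|\Omega(r)(1\pm \epsilon)$ through the logarithm; the key trick is that $\log$ is extremely stable, so the resulting slack is of smaller order than the Bernstein slack and can be hidden inside the same $\epsilon$. For JC the analogous issue is the random denominator, which requires handling a ratio of concentrated quantities; here I would use the fact that the denominator is bounded away from zero w.h.p.\ to linearise the ratio and again reuse the same Bernstein $\epsilon$. Once these two technical inflations are controlled, the algebra collapses to the three displayed bounds.
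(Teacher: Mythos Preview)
Your plan is correct and essentially identical to the paper's argument: empirical Bernstein concentration of each normalised score around the appropriate multiple of $\Area(u,\cdot)$, a union (equivalently, pigeonhole) bound over the $2K$ distinguished nodes $\{u^*_i\}\cup\{u^{\App}_i\}$, and the inequality $\sum_i s_\Acal(u,u^\Acal_i)\ge\sum_i s_\Acal(u,u^*_i)$ to absorb the remaining score deficit into $\utildpA$. The only cosmetic difference is that for $\tJC$ the paper concentrates $s_\tJC$ around the exact ratio $\Area/(2\Omega(r)-\Area)$ and then uses $(2\Omega(r)-\Area)\le 2\Omega(r)$ to extract the stated factor, rather than linearising the denominator as you propose, which sidesteps the random-denominator worry you flag.
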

\begin{proof}
\noindent \boxed{\text{Proof of (i).}} We observe that:
\begin{align}
& \sum_{i\in[K]}\Area (u,u^*_i)-\sum_{i\in[K]}\Area (u,u^{\overline{CN}}_i) \geq  \frac{ \sum_{i\in[K]} s_{\tCN}(u, u^{CN} _i)
-\sum_{i\in[K]} s_{\tCN}(u, u^{\overline{CN}} _i)}{ |\Vcal| }+ 2 k\epsilon\nn\\[-0.05cm] 
& \overset{1}{\implies}  \sum_{i\in[K]} \frac{s_{\tCN}(u, u^{CN} _i)}{|\Vcal|} -\sum_{i\in[K]}\frac{s_{\tCN}(u, u^{*} _i)}{ |\Vcal| }
+  \sum_{i\in[K]} \frac{s_{\tCN}(u, u^{\overline{CN}} _i)}{ |\Vcal| } -\sum_{i\in[K]}\frac{s_{\tCN}(u, u^{\tCN} _i)}{|\Vcal|}
\end{align}
\begin{align}
&\quad\quad+\sum_{i\in[K]}\Area (u,u^*_i)-\sum_{i\in[K]}\Area (u,u^{\overline{CN}} _i)  \geq 2 k\epsilon\nn\\[-0.05cm]
&\overset{2}{\implies}  \bigvee_{i\in[K]}\left (  \frac{s_{\tCN}(u, u^{\overline{CN}} _i )}{|\Vcal|} - \Area (u, u^{\overline{CN}} _i)
\geq \epsilon \right )  \bigvee_{i\in[K]}\left (  -\frac{s_{\tCN}(u, u^{*} _i)}{|\Vcal|} + \Area (u,u^* _i) \geq \epsilon \right )\nn\\[-0.05cm]
&\overset{3}{\implies}  \PP\left(  \sum_{i\in[K]}\Area (u,u^*_i)-\sum_{i\in[K]}\Area (u, u^{\overline{CN}} _i) \geq  \frac{-\sum_{i\in[K]} s_{\tCN}(u,  u^{\overline{CN}} _i)
+\sum_{i\in[K]} s_{\tCN}(u, u^\tCN _i)}{ |\Vcal| }+ 2 k\epsilon  \right)\nn\\[-0.05cm]
& \le \sum_{i\in[K]} \PP  \left (  \frac{s_{\tCN}(u,  u^{\overline{CN}} _i)}{|\Vcal|}
- \Area (u, u^{\overline{CN}} _i) \geq \epsilon \right ) + \sum_{i\in[K]} \PP \left (  -\frac{s_{\tCN}(u, u^{*} _i)}{|\Vcal|} + \Area (u,u^* _i) \geq \epsilon \right ) \nn\\[-0.05cm]
&\overset{4}{\le} 2K\delta
\end{align}
The statement (1) is because $s_{\tCN} (u, u^{\tCN}  _i)>s_{\tCN} (u  ,u^{*}  _i)$. The statement (2) is because $X+Y>a\implies X>a \text{ or } Y>a$. The statement (3) is due to (1) and (2). Ineq. (4) is due to empirical Bernstein inequality~\cite{maurer2009empirical}.

\noindent \boxed{\text{Proof of (ii).}} Note that:
$\EE[s_\tAA (u,v)]=\frac{\Area(u,v)}{\log(|\Vcal| \Omega(r))}$. 
%
We observe that:
\begin{align}
& \sum_{i\in[K]}\Area(u,u^*_i)-\sum_{i\in[K]}\Area(u, u^{\overline{AA}}_i) \geq \log(|\Vcal| \Omega(r)) \left[\frac{ \sum_{i\in[K]} s_{\tAA}(u, u^{\tAA} _i)
-\sum_{i\in[K]} s_{\tAA}(u, u^{\overline{\tAA}} _i)}{ |\Vcal| } + 2 K\epsilon\right] \nn\\[-0.05cm]
& \overset{1}{\implies}  \sum_{i\in[K]} \frac{s_{\tAA}(u, u^{\tAA} _i)}{|\Vcal|} -\sum_{i\in[K]}\frac{s_{\tAA}(u, u^{*} _i)}{ |\Vcal| }
+  \sum_{i\in[K]} \frac{s_{\tAA}(u, u^{\overline{\tAA}} _i)}{ |\Vcal| } -\sum_{i\in[K]}\frac{s_{\tAA}(u, u^{\tAA} _i)}{|\Vcal|}\nn\\[-0.05cm]
&\quad\quad+\sum_{i\in[K]}\frac{\Area (u,u^*_i)}{ \log(|\Vcal| \Omega(r)) }-\sum_{i\in[K]}\frac{\Area (u,u^{\overline{\tAA}} _i)}{ \log(|\Vcal| \Omega(r)) }  \geq 2 k\epsilon\nn\\
&\overset{2}{\implies}  \bigvee_{i\in[K]}\left (  \frac{s_\tAA (u, u^{\overline{AA}} _i )}{|\Vcal|} -  \frac{\Area (u,u^{\overline{AA}} _i)}{\log (|\Vcal|\Omega(r))} \geq \epsilon \right )
\bigvee_{i\in[K]}\left (  -\frac{s_\tAA (u, u^{*}  _i)}{|\Vcal|} +  \frac{\Area(u,u^*_i)}{\log (|\Vcal|\Omega(r))} \geq \epsilon \right )\nn\\[-0.05cm]
&\overset{3}{\implies}  \PP\left(  \sum_{i\in[K]}\Area(u,u^*_i)-\sum_{i\in[K]}\Area(u,u^{\overline{\tAA}} _i) \right.\nn\\[-0.05cm]
&\quad\quad\quad\quad \quad\quad  \geq \left. \log(|\Vcal| \Omega(r)) \left[\frac{ \sum_{i\in[K]} s_{\tAA}(u, u^{\tAA} _i) 
-\sum_{i\in[K]} s_{\tAA}(u, u^{\overline{\tAA}} _i)}{ |\Vcal| } + 2 K\epsilon\right] \right)\nn\\
& \le \sum_{i\in[K]} \PP\left( \frac{s_{\tAA}(u,  u^{\overline{\tAA}} _i)}{|\Vcal|} -  \frac{\Area(u,u^{\overline{\tAA}} _i)}
{\log (|\Vcal|\Omega(r))} \geq \epsilon \right)+ \sum_{i\in[K]} \PP\left(  -\frac{s_\tAA (u, u^{*}  _i)}{|\Vcal|} + \frac{\Area(u,u^*_i)}{\log (|\Vcal|\Omega(r))} \geq \epsilon  \right)\nn\\[-0.05cm]
&\overset{4}{\le} 2k\delta
\end{align}
The statements (1)---(4) follow using the same argument as in the proof of~(i).

\noindent \boxed{\text{Proof of (iii).}} Note that:
$\EE[s_{\tJC} (u,v)]=\frac{A(u,v)}{2\Omega(r)-A(u,v)} $.
Suppose we have:
\begin{align}
& \sum_{i\in[K]}\Area(u,u^*_i)-\sum_{i\in[K]}\Area(u,u^{\overline{\tJC}} _i) \geq  2\Omega(r) \left[\frac{ \sum_{i\in[K]} s_{\tJC}(u, u^{\tJC} _i) 
-\sum_{i\in[K]} s_{\tJC}(u, u^{\overline{\tJC}} _i)}{ |\Vcal| } + 2 K\epsilon\right]\nn
\end{align}
which implies that
\begin{align}
&\sum_{i\in[K]}\Area(u,u^*_i)-\sum_{i\in[K]}\Area(u,u^{\overline{\tJC}} _i)\nn\\
& \  \geq \left[\frac{ \sum_{i\in[K]} s_{\tJC}(u, u^{\tJC} _i) 
-\sum_{i\in[K]} s_{\tJC}(u, u^{\overline{\tJC}} _i)}{ |\Vcal| } + 2 K\epsilon\right] \left[\frac{(2\Omega(r)-\Area(u,u^* _i))(2\Omega(r)-\Area(u, u^{\overline{\tJC}} _i))} {2\Omega(r)}\right]\nn\\[-0.05cm]
& \overset{1}{\implies}  \sum_{i\in[K]} \frac{s_{\tJC}(u, u^{\tJC} _i)}{|\Vcal|} -\sum_{i\in[K]}\frac{s_{\tJC}(u, u^{*} _i)}{ |\Vcal| }
+  \sum_{i\in[K]} \frac{s_{\tJC}(u, u^{\overline{\tJC}} _i)}{ |\Vcal| } -\sum_{i\in[K]}\frac{s_{\tJC}(u, u^{\tJC} _i)}{|\Vcal|}\nn
\end{align}
\begin{align}
&\quad\quad+\sum_{i\in[K]}\frac{\Area (u,u^*_i)}{ 2\Omega(r)-\Area(u,u^*_i) }-\sum_{i\in[K]}\frac{\Area (u,u^{\overline{\tJC}} _i)}{ 2\Omega(r)-\Area(u, u^{\overline{\tJC}} _i) }  \geq 2 k\epsilon\nn\\[-0.05cm]
&\overset{2}{\implies}  \bigvee_{i\in[K]}\left (  \frac{s_\tJC (u, u^{\overline{\tJC}} _i )}{|\Vcal|} -  \frac{\Area (u,u^{\overline{\tJC}} _i)}{2\Omega(r)-\Area(u,u^{\overline{\tJC}} _i)} \geq \epsilon \right )
\bigvee_{i\in[K]}\left (  -\frac{s_\tJC (u, u^{*}  _i)}{|\Vcal|} +  \frac{\Area(u,u^*_i)}{2\Omega(r)-\Area(u,u^*_i)} \geq \epsilon \right )\nn\\[-0.05cm]
&\overset{3}{\implies}  \PP\left(  \sum_{i\in[K]}\Area(u,u^*_i)-\sum_{i\in[K]}\Area(u,u^{\overline{\tJC}} _i) \right. \nn\\[-0.05cm]
&\quad\quad\quad\quad \quad\quad  \geq \left. 2\Omega(r)  \left[\frac{ \sum_{i\in[K]} s_{\tJC}(u, u^{\tJC} _i) 
-\sum_{i\in[K]} s_{\tJC}(u, u^{\overline{\tJC}} _i)}{ |\Vcal| } + 2 K\epsilon\right] \right)\nn\\ 
& \le \sum_{i\in[K]} \PP\left( \frac{s_{\tJC}(u,  u^{\overline{\tJC}} _i)}{|\Vcal|} -  \frac{\Area(u,u^{\overline{\tJC}} _i)}{2\Omega(r)-\Area(u,u^{\overline{\tJC}} _i )} 
\geq \epsilon \right)+ \sum_{i\in[K]} \PP\left(  -\frac{s_\tJC (u, u^{*}  _i)}{|\Vcal|} + \frac{\Area(u,u^*_i)}{2\Omega(r)-\Area(u,v)} \geq \epsilon  \right)\nn\\[-0.05cm]
&\overset{4}{\le} 2k\delta
\end{align}
Statements (1)---(4) follow with the same argument as in the proof of~(i).
\end{proof}
\begin{lemma}\label{lem:d-univ}
For any ranking based LP algorithm $\Acal$, we have
\begin{align}
  \sum_{i=1} ^K (d_{u\uaa_i}-d_{u u^*_i})\le  2rK -\Area ^{-1}\left(\sum_{i=1} ^K \Area(u,u^*_i)-\sum_{i=1} ^K \Area(u,\uaa_i)\right),
 \end{align}
 where $\Area^{-1}(x)$ equals to the distance between nodes $u,v$ in $D$ dimensional hypersphere,  for which $\Area(u,v)=x$.
\end{lemma}
\begin{proof}
We have the following series of inequalities, each due to Proposition~\ref{aux-prop-1} (ii).
\begin{align}
&\Area ^{-1}\left(\Area (u, \uaa_1)\right)+ \Area ^{-1}\left(\Area (u, \uaa_2)\right) \nn\\
&\qquad \qquad \qquad \qquad{\le} \Area ^{-1}(0)+\Area ^{-1} \left(\Area (u, \uaa_1)+\Area (u, \uaa_2)\right) \\[0.3cm]
&\Area ^{-1}\left(\Area (u, \uaa_1)+\Area (u, \uaa_2)\right)+ \Area ^{-1}\left(\Area (u, \uaa_3)\right)\nn \\
& \qquad \qquad \qquad \qquad {\le} \Area ^{-1}(0)+\Area ^{-1} \left(\Area (u, \uaa_1)+\Area (u, \uaa_2)+\Area (u, \uaa_3)\right) \\
& \hspace{6cm}\vdots  \nn \\
&\Area ^{-1}\left(\sum_{i=1}^{K-1} \Area (u, \uaa_1)\right)+ 
\Area ^{-1}\left(\Area (u, \uaa_K)\right)\nn\\
& \qquad \qquad \qquad \qquad  {\le} \Area ^{-1}(0)+\Area ^{-1} \left(\sum_{i=1}^K \Area (u, \uaa_K)\right) \\[-0.05cm]
&\Area ^{-1}\bigg(\sum_{i=1}^K \Area(u,\uaa_i)\bigg) +\Area ^{-1} \bigg( \sum_{i=1}^K \Area (u,u^* _i)- \sum_{i=1}^K \Area (u, \uaa_i) \bigg)\nn \\
& \qquad \qquad \qquad \qquad  {\le} \Area ^{-1}(0)+\Area ^{-1} \bigg(\sum_{i=1} ^K \Area (u,u^* _i)\bigg)  
\end{align}
Taking the telescoping sum, we have:
\begin{align}
&\sum_{i=1}^K \Area ^{-1} \left(\Area (u, \uaa_i)\right)+\Area ^{-1} \left( \sum_{i=1}^K \Area (u,u^* _i)- \sum_{i=1}^K \Area (u, \uaa_i) \right) \nn\\
&\le K\Area^{-1}(0)+\Area ^{-1}\left(\sum_{i=1}^K \Area (u,u^* _i)\right)\\
& \le 2Kr+\Area ^{-1}\left(\sum_{i=1}^K \Area (u,u^* _i)\right)\qquad (\text{Due to }\Area^{-1}(0) =  2r) 
\end{align}
\begin{align}
&{\le} 2Kr+\Area ^{-1}\left(\Area (u,u^* _1)\right) \qquad \left(\Area ^{-1} (.) \text{is decreasing (Proposition~\ref{aux-prop-1} (i))} \right) \\
&{\le} 2Kr+\sum_{i=1}^K \Area ^{-1}\left( \Area (u,u^* _i)\right) \qquad (\text{adding additional positive terms})
\end{align}
Hence, 
$ \sum_{i=1} ^K (d_{u\uaa_i}-d_{u u^*_i})\le 2rK -\Area ^{-1}\bigg(\sum_{i=1} ^K \Area(u,u^*_i)-\sum_{i=1} ^K \Area(u,\uaa_i)\bigg)$
\end{proof}

Lemma~\ref{lem:d-univ} can be used to prove the corresponding bounds for different LR heuristics.
\begin{lemma}\label{lem:lastX}
 If we define  $\epsilon=\sqrt{\frac {2\log (2/\delta)}{|\Vcal|}}+\frac{7 \log (2/\delta)}{3(|\Vcal|-1)}$, then, with probability $1-2K\delta$
 \begin{align}
 & (i) \ \tCN: \ \ \sum_{i=1} ^K (d_{u u^{\dCN}_i}-d_{u u^*_i})\le 2Kr\left(\frac{2K {\epsilon}+\utildpCN/|\Vcal|}{\Omega(r)}\right)^{1/KD}\\
 & (ii)\ \tAA: \  \ \sum_{i=1} ^K (d_{uu^{\dAA}_i}-d_{u u^*_i})\le  2Kr\left(\frac{\log (|\Vcal| \Omega(r))( 2K {\epsilon} +\utildpAA/|\Vcal| )}{\Omega(r)}\right)^{1/KD}\\
 & (iii)\ \tJC: \ \ \sum_{i=1} ^K (d_{uu^{\dJC}_i}-d_{u u^*_i})\le 2Kr\left({4K\epsilon+2\utildpJC/|\Vcal|}\right)^{1/KD}
  \end{align}
\end{lemma}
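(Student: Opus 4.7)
The plan is to chain three ingredients: Lemma~\ref{lem:d-univ} (which converts a sum of $K$ distance deficits into a single application of $\Area^{-1}$ to the area deficit), Lemma~\ref{lem:util-area-dp} (which controls the area deficit in high probability for each of CN, AA, JC), and a geometric inversion step that exploits the power-law structure $\Omega(r) = \Omega(1)\, r^D$ from Eq.~(\ref{eq:dd}) to bound $2Kr - \Area^{-1}(\cdot)$ from above by an explicit function.

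First, I would apply Lemma~\ref{lem:d-univ} with the generic algorithm $\Acal$ specialized to the corresponding \our\ variant $\App \in \{\dCN, \dAA, \dJC\}$, obtaining
\[
\sum_{i=1}^K \bigl(d_{u u^{\App}_i} - d_{u u^*_i}\bigr) \;\le\; 2Kr - \Area^{-1}\!\Big(\sum_{i=1}^K \Area(u, u^*_i) - \sum_{i=1}^K \Area(u, u^{\App}_i)\Big).
\]
This isolates the only heuristic-specific quantity, namely the area deficit inside $\Area^{-1}$.

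Second, for each of CN, AA, JC I would plug in the matching upper bound on the area deficit from Lemma~\ref{lem:util-area-dp}, which holds jointly with probability at least $1 - 2K\delta$. Because $\Area^{-1}$ is decreasing (Proposition~\ref{aux-prop-1}(i)), over\-estimating the deficit only decreases $\Area^{-1}(\cdot)$, so replacing the exact deficit by its high-probability upper bound preserves the inequality. After this substitution, the right-hand side is an explicit expression in $\epsilon$, $|\Vcal|$, $\Omega(r)$ and $\utildpA$.

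Third, I would establish the algebraic bound
\[
2Kr - \Area^{-1}(y) \;\le\; 2Kr\,\bigl(y/\Omega(r)\bigr)^{1/KD},
\]
valid on the range of $y$ arising in the previous step, and handle the regime $y \ge \Omega(r)$ trivially via the deterministic upper bound $\sum_i (d_{u u^{\App}_i} - d_{u u^*_i}) \le 2Kr$ that follows from every latent distance lying in $[0, 2r]$. Combining this inversion bound with the explicit upper bounds on the area deficit from step~two matches the three cases stated in the lemma exactly.

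The main obstacle is this third step. The closed form of $\Area^{-1}$ from Proposition~\ref{aux-prop-1} is not elementary, and extracting the specific exponent $1/KD$ (rather than the naive $1/D$ one would obtain purely from the scaling $\Omega(r) \propto r^D$) seems to require exploiting the $K$-fold telescoping already used inside the proof of Lemma~\ref{lem:d-univ}. A plausible route is to combine the convexity of $\Area^{-1}$ (Proposition~\ref{aux-prop-1}(i)) with a Jensen-type averaging across the $K$ positions, effectively reassigning the single application of $\Area^{-1}$ at the combined deficit back to $K$ per-node contributions. Once this inversion bound is in hand, the remainder of the argument is routine substitution and a union bound over the three failure events from Lemma~\ref{lem:util-area-dp}.
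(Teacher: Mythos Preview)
Your first two steps match the paper exactly: invoke Lemma~\ref{lem:d-univ}, then feed in the high-probability area-deficit bound from Lemma~\ref{lem:util-area-dp}, using the monotonicity of $\Area^{-1}$ from Proposition~\ref{aux-prop-1}(i). The paper does precisely this and arrives at $\sum_i (d_{u u^{\App}_i}-d_{uu^*_i}) \le 2rK - \Area^{-1}(2K\bar\epsilon)$ with $\bar\epsilon = \epsilon + \utildpA/(2K|\Vcal|)$ (and the corresponding analogues for AA, JC).

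Where you and the paper part ways is your third step. Your proposed route---a Jensen-type averaging over the $K$ positions to ``reassign'' the single $\Area^{-1}$ back to per-node contributions---is not what the paper does, and it is not clear it can produce the exponent $1/(KD)$; convexity of $\Area^{-1}$ pushes inequalities in the wrong direction for that kind of redistribution. The paper's route is purely algebraic and does not revisit the $K$ positions at all. It first uses the explicit hypersphere lower bound $\Area^{-1}(y) \ge 2r\bigl(1-(y/\Omega(r))^{1/D}\bigr)$ (which comes directly from the geometry in \citet{SarkarCM2011LPembed} and Eq.~\eqref{eq:dd}), and then establishes the scalar inequality
\[
2r\bigl(1-(y/\Omega(r))^{1/D}\bigr) \;\ge\; 2rK\bigl(1-(y/\Omega(r))^{1/A}\bigr)
\]
for $A=KD$. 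This is done by rewriting the requirement as $A \ge \log(y/\Omega(r)) \big/ \log\!\bigl(1-\tfrac{1}{K}+\tfrac{1}{K}(y/\Omega(r))^{1/D}\bigr)$ and then applying the concavity of $\log$ to the denominator, which shows the right-hand side is maximized (as $y\to 0$) by $KD$. Once this holds, $2rK - \Area^{-1}(y) \le 2Kr(y/\Omega(r))^{1/KD}$ follows immediately, and substituting the three values of $y$ from Lemma~\ref{lem:util-area-dp} gives the three cases. So the missing concrete idea is: don't try to recover the $K$ from the positions---get it from the one-line algebraic comparison between the $1/D$ and $1/(KD)$ power functions, using concavity of $\log$.
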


\begin{proof}
From Lemma~\ref{lem:d-univ}, for $\Acal\in\set{\tAA,\tCN,\tJC}$, we have:
 \begin{align}
 \sum_{i=1} ^K (d_{u u^{\Acal} _i}-d_{u u^*_i})\le 2rK -\Area ^{-1}\bigg(\sum_{i=1} ^K \Area(u,u^*_i)-\sum_{i=1} ^K \Area(u,u^{\Acal} _i)\bigg)\label{eq:inm-1}
\end{align}
We define $\overline{\epsilon}= \dfrac{\sum_{i=1} ^K \Area(u,u^*_i)-\sum_{i=1} ^K \Area(u,u^{\Acal} _i)}{K}$.  
Then, Eq.~\eqref{eq:inm-1} is less than $ 2rK-\Area ^{-1} (2K\overline{\epsilon})$ with probability $1-2K\delta$ from Lemma~\ref{lem:d-univ}.
Now, 
\begin{align}
& \Area(u,v) \ge \Omega(r) \bigg(1-\frac{d_{uv}}{2r}\bigg)^{D}\ \  \forall d_{uv}>0\\
&\implies \Area ^{-1} (2K\overline{\epsilon}) \ge 2r\left(1-(2K\overline{\epsilon}/{\Omega(r)})^{1/D}\right)
\end{align}
We aim to approximate the above quantity by 
\begin{align}
 2r(1-(2K\overline{\epsilon}/{\Omega(r)})^{1/D}) \ge 2rK(1-(2K\overline{\epsilon}/{\Omega(r)})^{1/A})
\end{align}
for some suitable dimension $A$, which says that
\begin{align}
A \ge \frac{\log (2K\overline{\epsilon}/{\Omega(r)})} {\log\left(1-\frac{1}{K}+\frac{1}{K} \left(\frac{2K\overline{\epsilon}}{\Omega(r)}\right)^{1/D}\right) } 
\ge \frac{\log (2K\overline{\epsilon}/{\Omega(r)})} {\log\left(1-\frac{1}{K}+\frac{1}{K} \left(\frac{2K\overline{\epsilon}}{\Omega(r)}\right)^{1/D}\right) }
 {\ge} \frac{\log (2K\overline{\epsilon}/{\Omega(r)})} {1-\frac{1}{K}+\frac{1}{KD} \log \left(\frac{2K\overline{\epsilon}}{\Omega(r)}\right) }\nn
\end{align}
The last inequality is due to concavity of logarithmic function. Now, 
the quantity $\frac{\log (2K\overline{\epsilon}/{\Omega(r)})} {1-\frac{1}{K}+\frac{1}{KD} \log \left(\frac{2K\overline{\epsilon}}{\Omega(r)}\right) }$ achieves
maximum at $\overline{\epsilon}\to 0$ when $A\ge KD$. Then, $\sum_{i=1} ^K (d_{u\uaa_i}-d_{u u^*_i}) \le 2Kr(2K\overline{\epsilon}/{\Omega(r)})^{1/KD}$.

Lemma~\ref{lem:util-area-dp} provides the following:
  \vspace{0.2cm}
\begin{itemize}
    \item By putting $\overline{\epsilon}=\epsilon+\utildpCN/(2K|\Vcal|)$, we obtain the bound for $\Acal=\tCN$.
    \vspace{0.2cm}
    \item By putting $\overline{\epsilon}= \log (|\Vcal| \Omega(r))( 2K {\epsilon} +\utildpAA/|\Vcal| )$, we obtain the bound for $\Acal=\tAA$.
    \vspace{0.2cm}
    \item By putting $\overline{\epsilon}=\Omega(r)\cdot\left(4K\epsilon+2\utildpJC/|\Vcal|\right)$, we obtain the bound for $\Acal=\tJC$.
\end{itemize}

\end{proof}

\begin{proposition}\label{aux-prop-1}  
(i)~$\Area ^{-1} (y)$ is decreasing and convex.
(ii)~$\Area^{-1}  (x)+ \Area^{-1}  (a-x)< \Area^{-1}  (0)+ \Area^{-1}  (a)$. 
\end{proposition}
\begin{proof}
(i)~We have that
 \begin{align}
  \frac{d \Area ^{-1}(y)}{dy}=-\left[C \left(1-\frac{( \Area ^{-1} (y))^2}{4r^2}\right)^{\frac{D-1}{2}}\right]^{-1},
 \end{align}
 where $C>0$ depends on $D$ and $r$~\cite{SarkarCM2011LPembed}.
We differentiate again and have
 \begin{align}
  \frac{d^2 \Area ^{-1}(y)}{dy^2}=-\left[C \left(1-\frac{( \Area ^{-1} (y))^2}{4r^2}\right)^{\frac{D+1}{2}}\right]^{-1} \Area^{-1}  (y) \frac{d \Area ^{-1}(y)}{dy}>0
 \end{align}
 (ii) Assume $f(x)= \Area^{-1}  (x)+ \Area^{-1}  (a-x)$. $d^2f(x)/dx^2= \frac{d^2 \Area^{-1}  (x)}{dx^2}+ \frac{d^2 \Area^{-1}  (a-x)}{dx^2}>0$.
 Moreover, $f'(x)=0$ at $x=a/2$. Hence $f(x)$ is U-shaped convex function. So, $f(x)\le f(0)=f(a)$.
\end{proof}

\section{Additional details about experiments}
\label{sec:app:expt-details}

\subsection{Dataset details}\label{sec:app:dataset}
We use five diverse datasets for our experiments. 
\begin{itemize}
\item \textbf{Facebook}~\cite{leskovec2012learning} is a snapshot of a part of Facebook's social network.
\item \textbf{USAir}~\cite{usair} is a network of US Air lines.
\item \textbf{Twitter}~\cite{leskovec2012learning} is a snapshot of a part of Twitter's social network.
%
\item \textbf{Yeast}~\cite{von2002comparative} is a protein-protein interaction
network in yeast. 
\item \textbf{PB}~\cite{ackland2005mapping} is a network of US political blogs.
\end{itemize}
\begin{table*}[!ht]
\centering
\resizebox{0.6\textwidth}{!}{
\begin{tabular}{|l||c|c|c|c|c|}
\hline

Dataset &$|\Vcal|$&$|\Ecal|$& $d_{avg}$ & Clust. Coeff. & Diameter \\ \hline \hline
Facebook	 & 4039 	 & 	 88234 	 & 	 43.69& 	 0.519& 	 8 \\ \hline 
USAir	 & 332 	 & 	 2126 	 & 	 12.81& 	 0.396 	& 60\\ \hline 
Twitter&	235	 & 10862	&	 92.44 &	 0.652 &  3\\ \hline 
Yeast	 & 2375 	 & 	 11693 	 & 	 9.85& 	 0.469 	 &15\\ \hline
PB	 & 1222 	 & 	 16714 	 & 	 27.36& 	 0.226& 	 8\\ \hline 
\end{tabular} }
\caption{Dataset statistics.}
\label{tab:stat-dataset}
\end{table*}%

\subsection{Implementation details about \our\ and its variants}
We design $\tf_{\thetab}$ (Eq.~\eqref{eq:dplp-linear} and~Eq.~\eqref{eq:umnn}) with deep neural networks.  
In practice, we set $a_j = \frac{1}{2}+\frac{j-1}{100}$ and $n_a=170$ 
to compute the intermediate quantity $\nu_{\betab}(s_{\Acal}(\cdot,\cdot))$ in~Eq.~\eqref{eq:dplp-linear}.
Our integrand function $g_{\bm{\phi}}(.)$ in Eq.~\eqref{eq:umnn} consists of one input layer, 20 hidden layers and
one output layer, in which each of the input  and 
hidden layers is a cascaded network of one linear and ReLU units and the output layer
is an ELU activation unit. We set the margin $\varrho=0.1$ in the pairwise loss in Eq.~\eqref{eq:opt-problem}.

We uploaded the code in \url{https://rebrand.ly/dplp} and also provided it in the supplementary material.
We implemented our methods--- \our\ and its two variants \ie, \our-Lin and \our-UMNN--- using Adam optimizer in pytorch.
In all cases, we set learning rate to be $0.1$ and weight decay to be $10^{-5}$. We used the neighbors and non-neighbors
for each query as an individual batch.

\subsection{Base LP protocols and their implementation details}\label{sec:app:impl}

\xhdr{Base LP protocols}
We  provided results for AA, CN, GCN and Node2Vec as four candidates of $\Acal$ in the main part of the paper.
Here, we also consider 
(i)~ two additional algorithms based on the triad-completion principle~\cite{LibenNowellK2007LinkPred} viz., 
Preferential Attachment (PA) and Jaccard coefficient (JC); and,  
(ii)~ four additional algorithms based on fitting node embeddings, viz., 
Struct2Vec~\cite{ribeiro2017struc2vec}, DeepWalk~\cite{perozzi2014deepwalk}, LINE~\cite{tang2015line} and PRUNE~\cite{lai2017prune}. 

\xhdr{Implementation details}
For triad-based LP heuristics, the implementations are trivial, and we need no hyper parameter tuning.
For embedding based methods, the node representations are generated using the training graph.
Then, following earlier work \cite{ZhangC2018LinkPredGNN,grover2016node2vec}, we use the $L_2$  features and use them to train a logistic classifier (in Liblinear~\cite{fan2008liblinear}) and then use the trained model to predict links.  In each of these method, we set the dimension of embedding to be~$z=80$ cross-validation.
The remaining hyper-parameters are as follows, which is also set using cross-validation.

 \begin{itemize}
    \item \textbf{GCN}: Number of epochs is $80$, learning rate is $0.01$, weight decay  is $5\times 10^{-4}$ and the number of hidden layers is $20$.
    \item \textbf{Node2Vec}: Number of walks is $10$ and the walk length is $80$.
    \item \textbf{PRUNE}: Learning rate is $10^{-4}$, the Number of epochs is $50$ and batch size is $300$.
    \item \textbf{DeepWalk}: Learning rate is $10^{-4}$, the Number of epochs is $50$ and batch size is $300$.
    \item \textbf{LINE}: Learning rate is $10^{-4}$, and  the number of negative samples is $5$.
    \item \textbf{Struct2Vec}: Number of walks is $10$, the walk length is $10$ and the number of layers is $6$.
\end{itemize}
 All the other hyperparameters are set as default in the corresponding software.

\subsection{Implementation details about baseline DP algorithms}
\xhdr{Staircase} Given the privacy leakage $\epsilon_p$, 
we produced noise from Staircase distribution~\cite[Algorithm 1]{geng2015optimal} \ie, \textsc{Staircase}$(\Delta_{\Acal},\epsilon_p)$, then added them to the scores and sorted the noised scores to select top-$K$ nodes.

\xhdr{Laplace} 
Given the privacy leakage $\epsilon_p$, 
we produced noise from Laplace distribution~\citep[Defs.\ 5, 6]{MachKDS2011AccuratePrivate} \ie, \textsc{Laplace}$(\Delta_{\Acal}/\epsilon_p)$, then added them to the scores and sorted the noised scores in decreasing order to select top-$K$ nodes.
 
\xhdr{Exponential}
Given the privacy leakage $\epsilon_p$,  we picked up the node  $v$  w.p. proportional to $\exp( \epsilon_p s_{\mathcal{A}} (u,v)/(2\Delta_{\mathcal{A}}) )$. 

\subsection{Computing infrastructure} 
The experiments were carried out on a 32-core 3.30GHz Intel server with 256GB RAM and an RTX2080 and a TitanXP GPU with 12GB GPU RAM each.

\section{Additional results}\label{sec:app:add}
\begin{figure*}
\centering
{\includegraphics[width=0.75\textwidth]{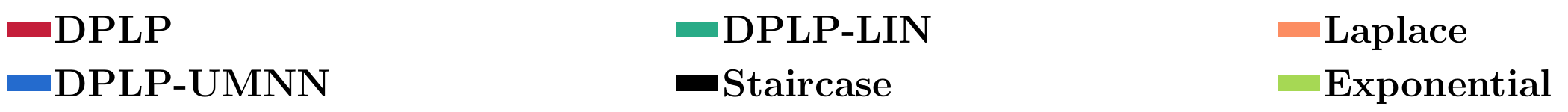}}
\\[-0.3cm]
\subfloat{\includegraphics[height=0.13\textwidth]{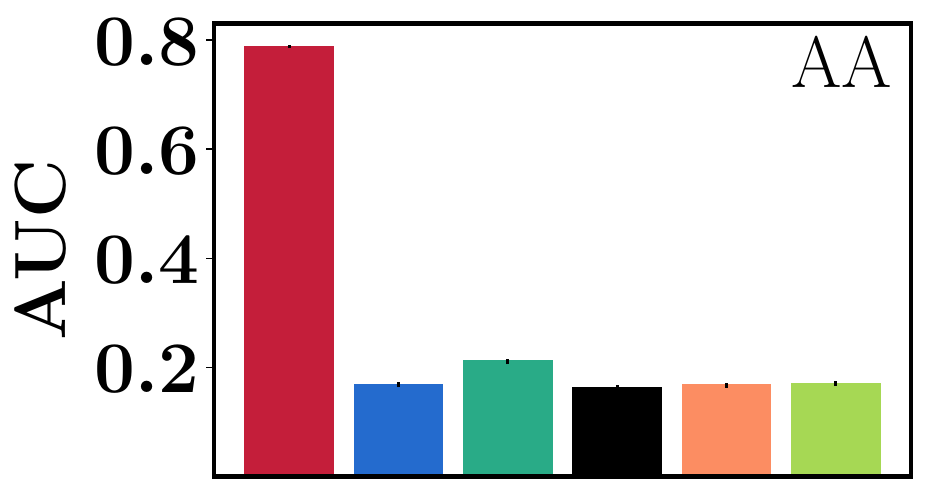}} 
\subfloat{\includegraphics[width=0.19\textwidth]{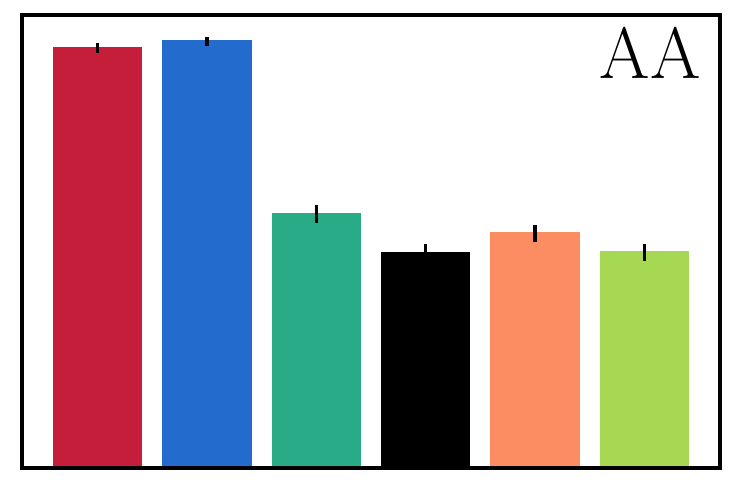}} 
\subfloat{\includegraphics[width=0.19\textwidth]{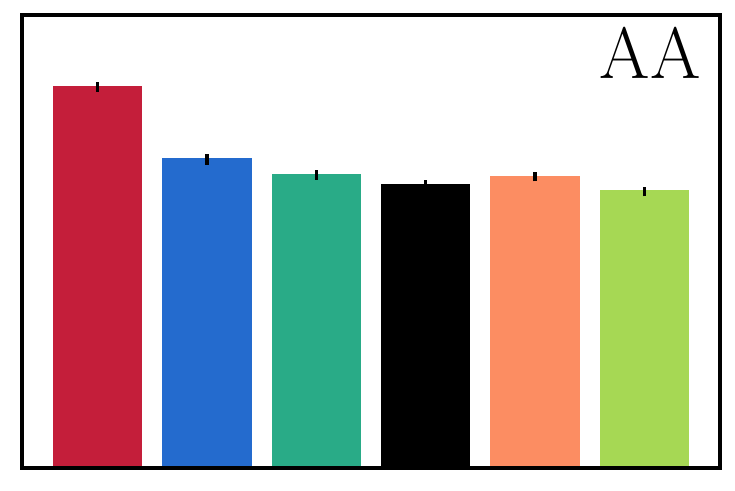}} 
\subfloat{\includegraphics[width=0.19\textwidth]{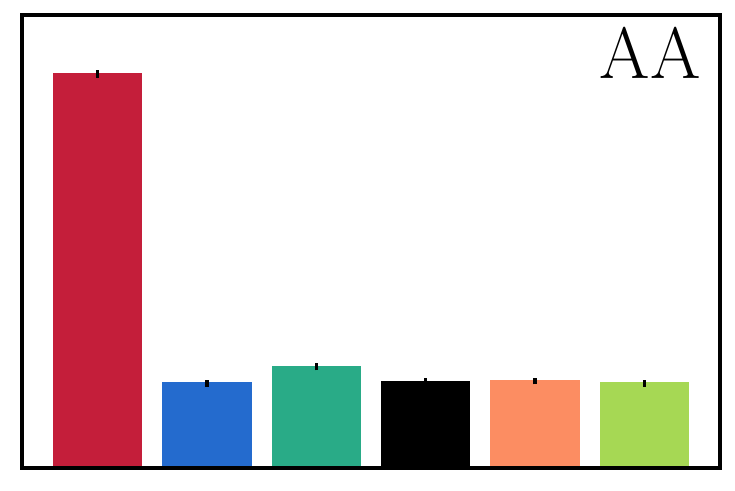}} 
\subfloat{\includegraphics[width=0.19\textwidth]{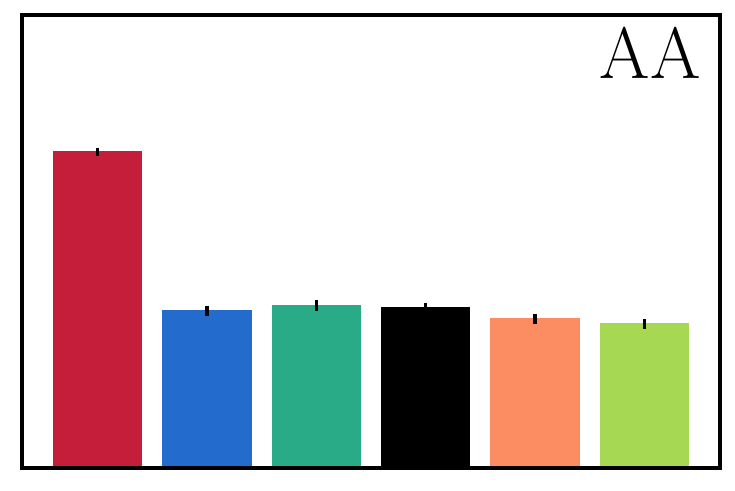}} \\
\subfloat{\includegraphics[height=0.13\textwidth]{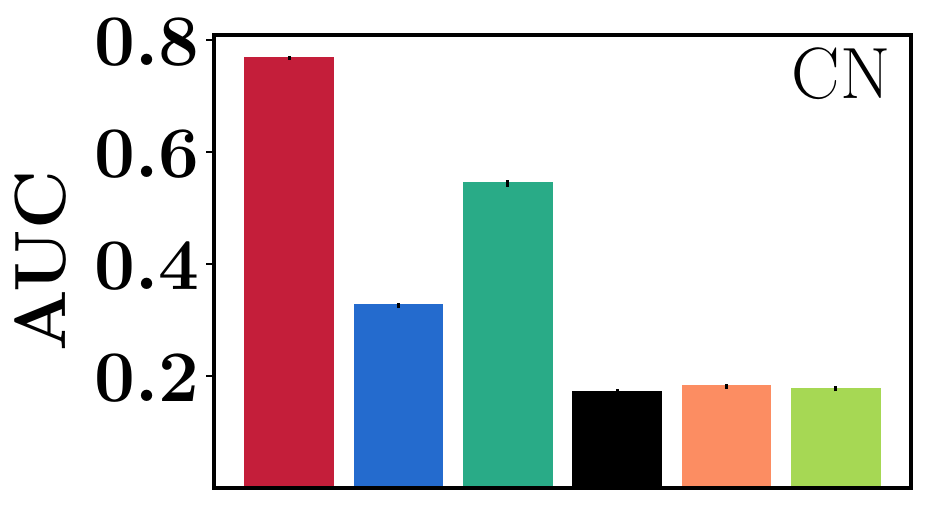}} 
\subfloat{\includegraphics[width=0.19\textwidth]{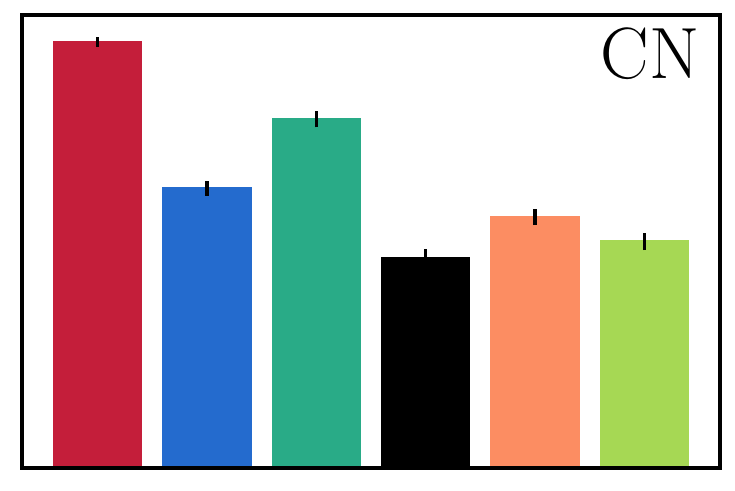}} 
\subfloat{\includegraphics[width=0.19\textwidth]{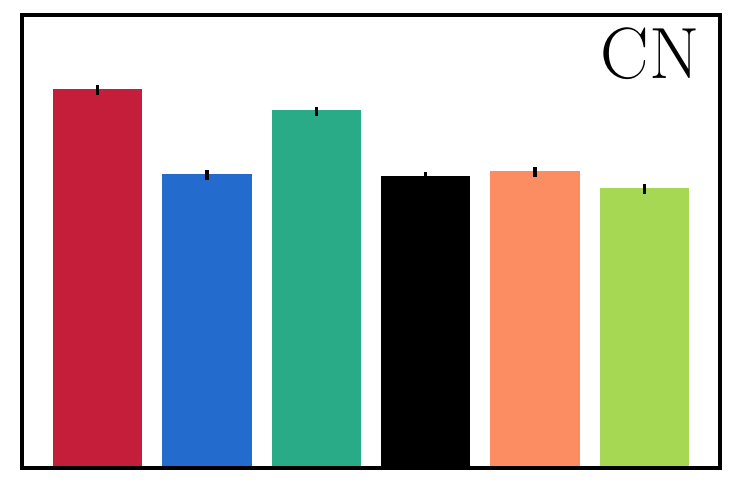}} 
\subfloat{\includegraphics[width=0.19\textwidth]{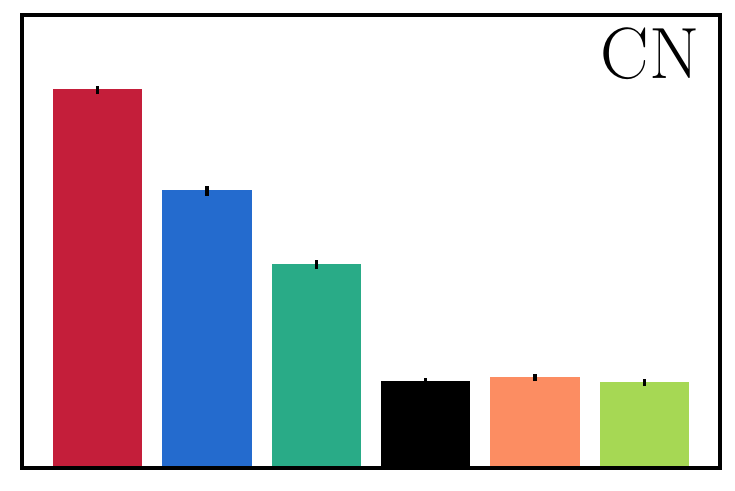}} 
\subfloat{\includegraphics[width=0.19\textwidth]{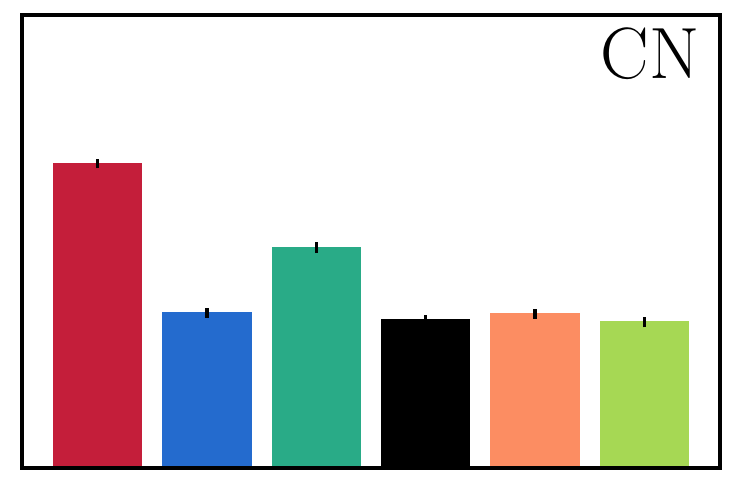}} \\
\subfloat{\includegraphics[height=0.13\textwidth]{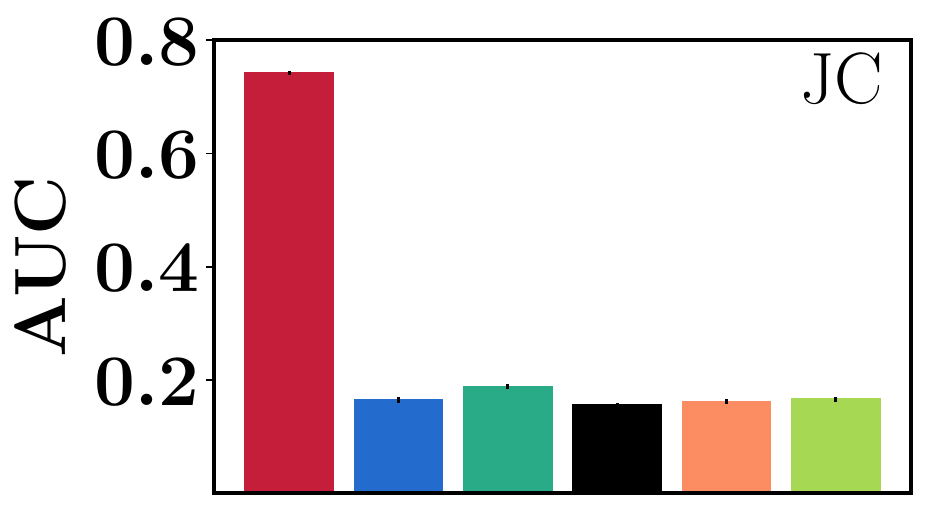}} 
\subfloat{\includegraphics[width=0.19\textwidth]{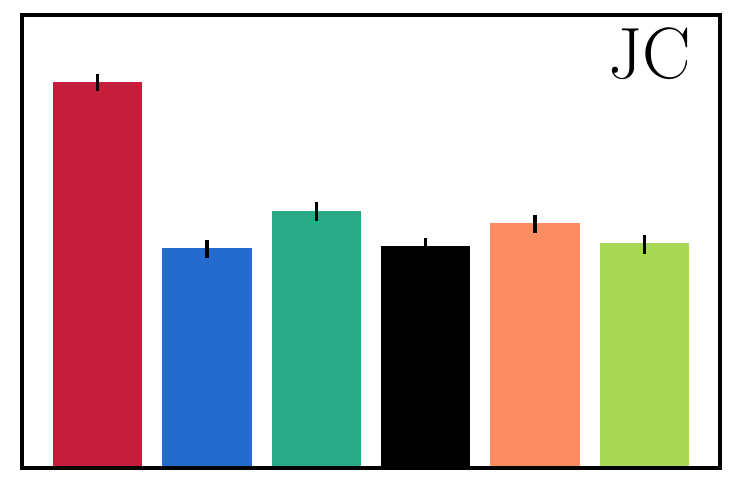}} 
\subfloat{\includegraphics[width=0.19\textwidth]{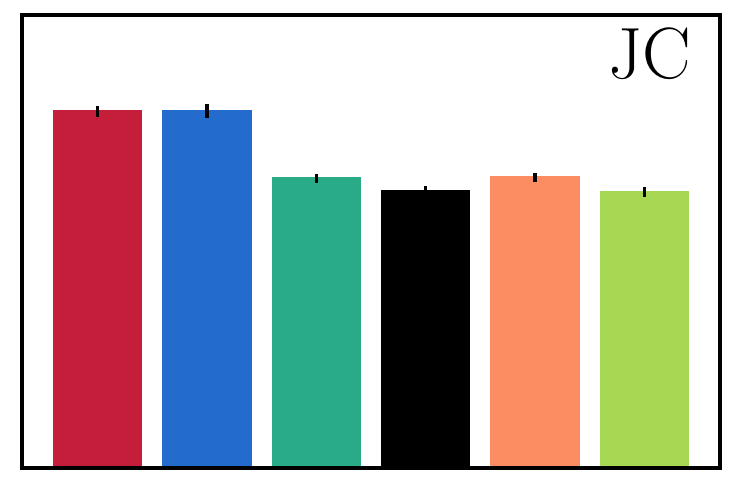}} 
\subfloat{\includegraphics[width=0.19\textwidth]{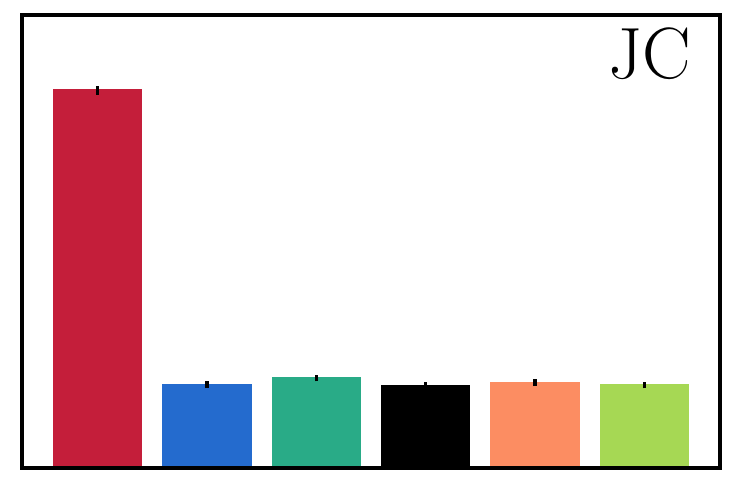}} 
\subfloat{\includegraphics[width=0.19\textwidth]{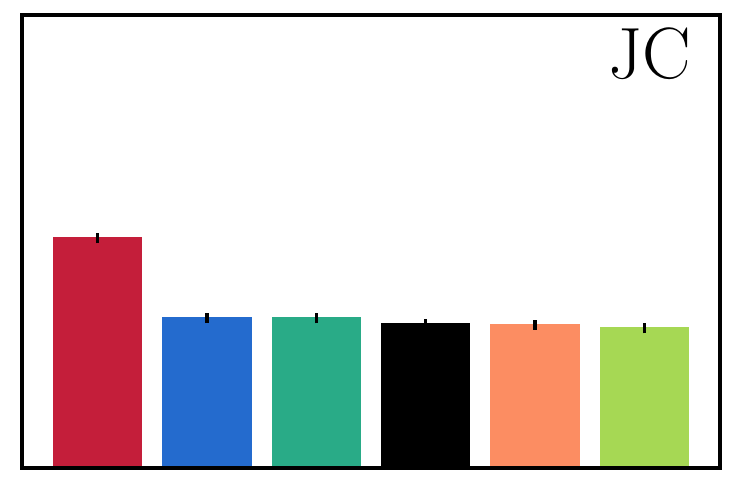}}\\
\subfloat[Facebook]{\setcounter{subfigure}{1}\includegraphics[height=0.13\textwidth]{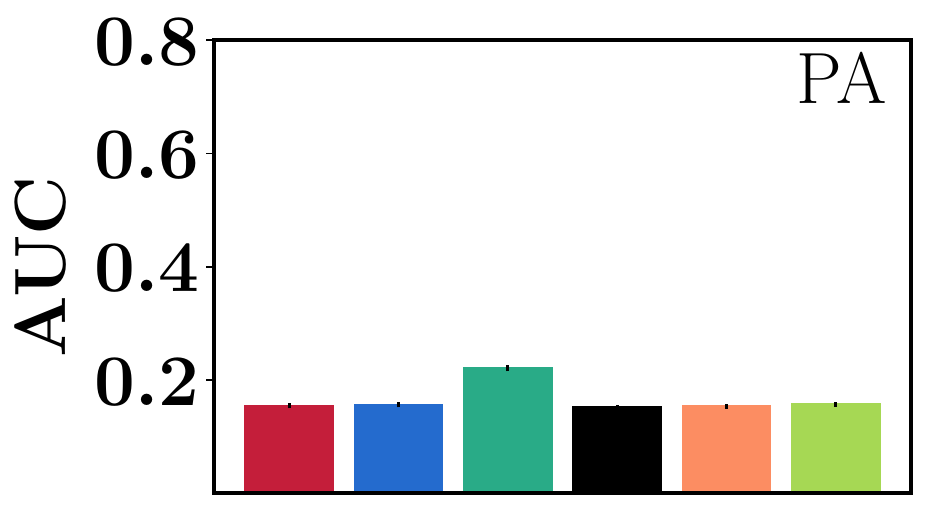}} 
\subfloat[USAir]{\includegraphics[width=0.19\textwidth]{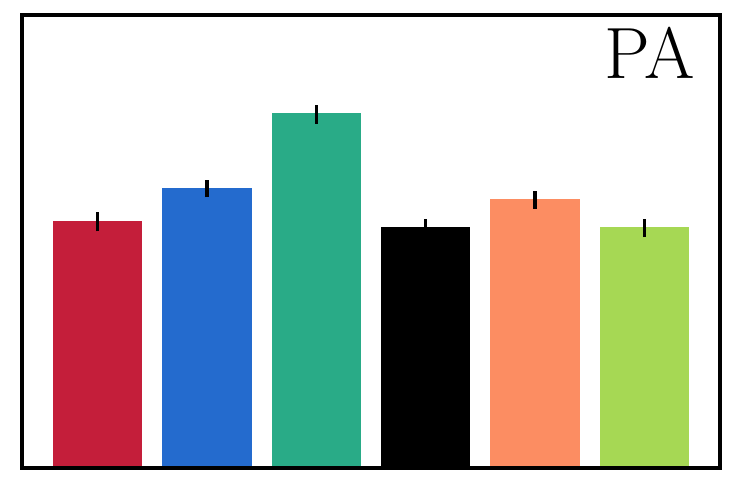}} 
\subfloat[Twitter]{\includegraphics[width=0.19\textwidth]{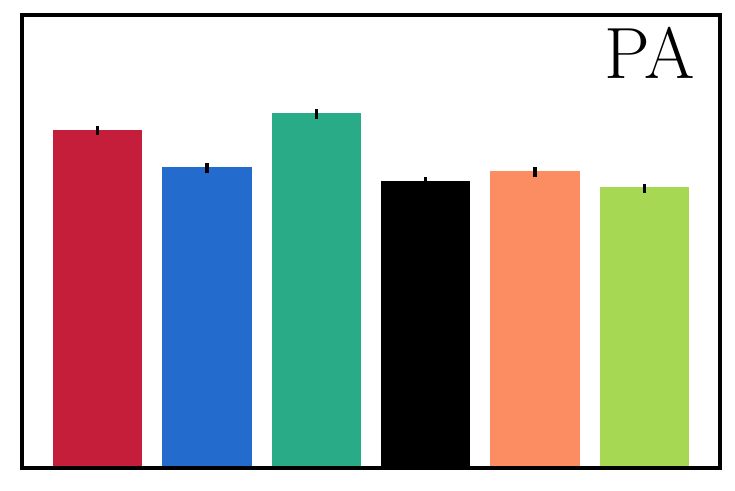}} 
\subfloat[Yeast]{\includegraphics[width=0.19\textwidth]{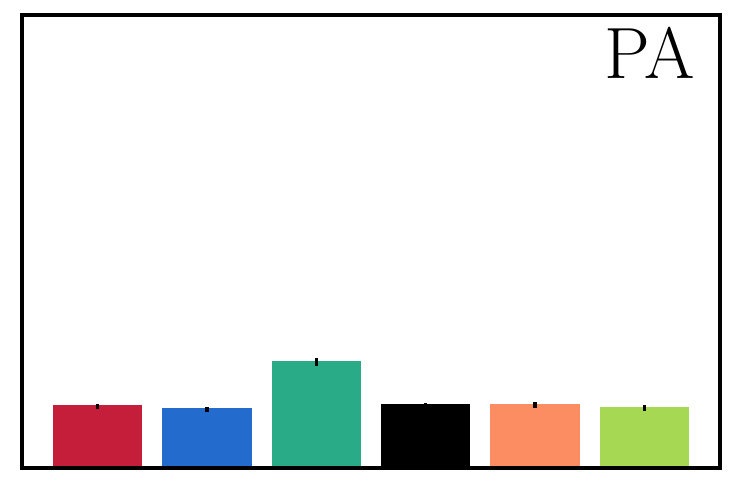}} 
\subfloat[PB]{\includegraphics[width=0.19\textwidth]{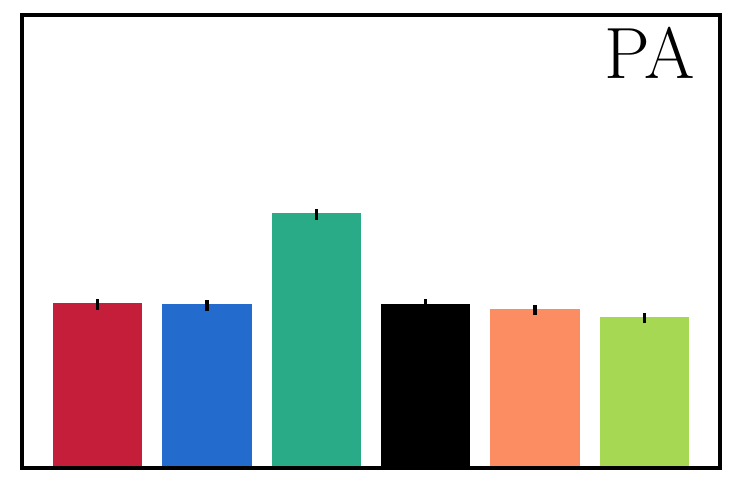}} 
\caption{Performance (AUC) on {Facebook, USAir, Twitter, Yeast, PB datasets} using various DP protocols: \our, \our-UMNN, \our-Lin, Staircase, Laplace and Exponential for 20\% held-out set with $\epsilon_p=0.1$ and $K=30$
for triad based LP protocols (AA, CN, JC, PA).
\our\ outperforms all the baselines in all cases, except preferential attachment, where \our-Lin turns out be the consistent winner.}
\label{fig:Add1}
\end{figure*}

\subsection{Results on triad based LP protocols}

Here, we first compare the predictive performance of \our{} against its two variants (\our-UMNN and \our-Lin) and three state-of-the-art baselines (Staircase, Laplace and Exponential) for 20\% held-out set with $\epsilon_p=0.1$ and $K=30$
for triad based LP protocols across all datasets.
Figure~\ref{fig:Add1} summarizes the results, which shows that \our\ outperforms all the baselines in all cases, except preferential attachment, where \our-Lin turns out be the consistent winner.

\subsection{Results on deep embedding based LP protocols}

Next, we compare the predictive performance of \our{} against its two variants (\our-UMNN and \our-Lin) and three state-of-the-art baselines (Staircase, Laplace and Exponential) for 20\% held-out set with $\epsilon_p=0.1$ and $K=30$,
for deep embedding based LP protocols across all datasets.
Figure~\ref{fig:Add2} summarizes the results, which shows that \our\ outperforms the baselines in majority of the embedding methods.
For our data sets, we observed that deep node embedding approaches generally fell short of traditional triad-based heuristics.
We believe iterated neighborhood pooling already introduces sufficient noise that relaxing $\epsilon_p$ does not help these node embedding methods.

\begin{figure*}[t!]
{\hspace{3cm}{\includegraphics[width=0.75\textwidth]{figx/legend_notbold}}
\\
\subfloat{\includegraphics[height=0.13\textwidth]{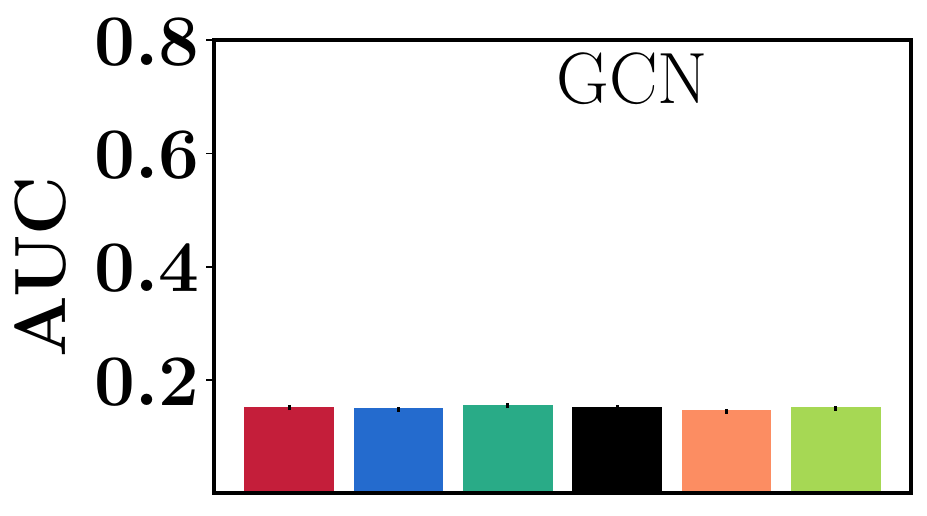}} 
\subfloat{\includegraphics[width=0.19\textwidth]{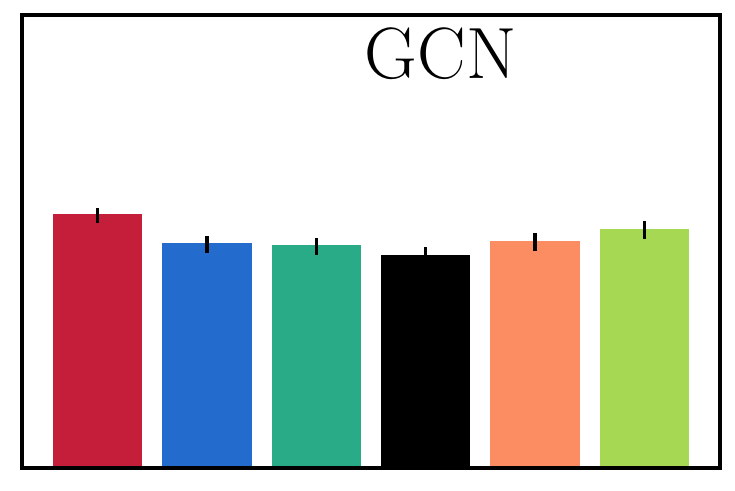}} 
\subfloat{\includegraphics[width=0.19\textwidth]{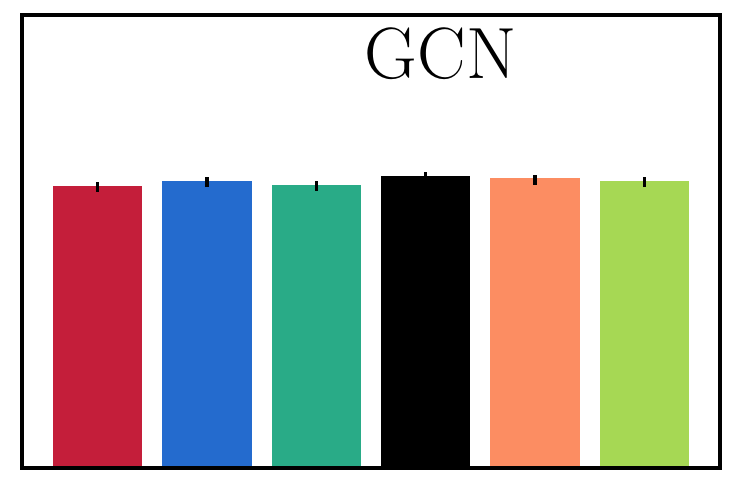}} 
\subfloat{\includegraphics[width=0.19\textwidth]{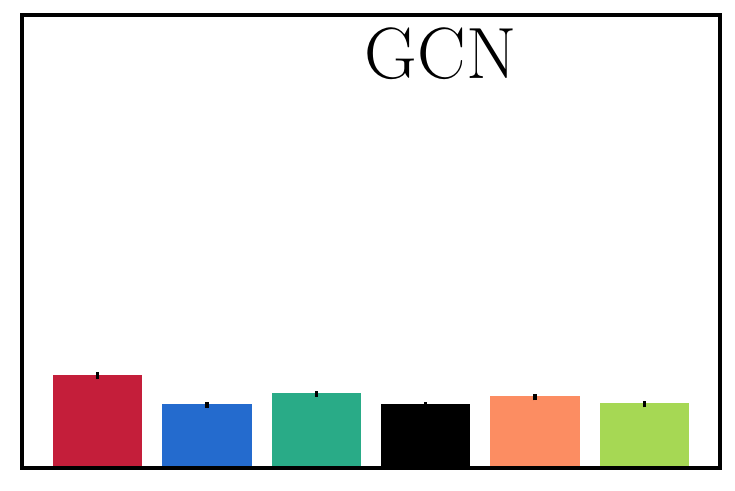}} 
\subfloat{\includegraphics[width=0.19\textwidth]{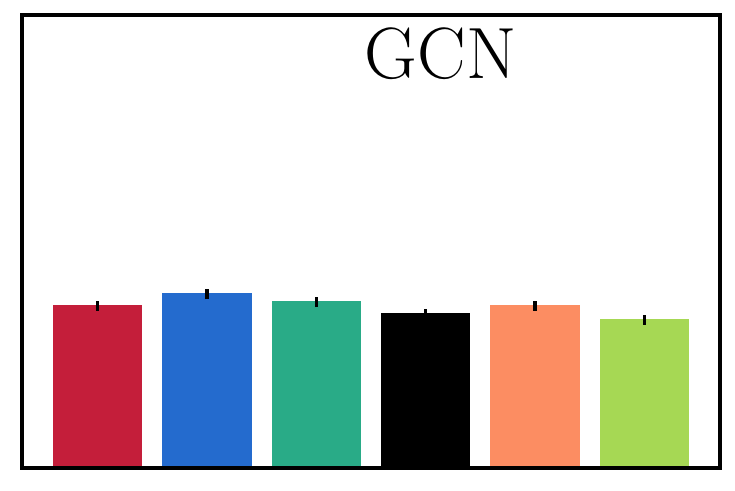}} \\
\subfloat{\includegraphics[height=0.13\textwidth]{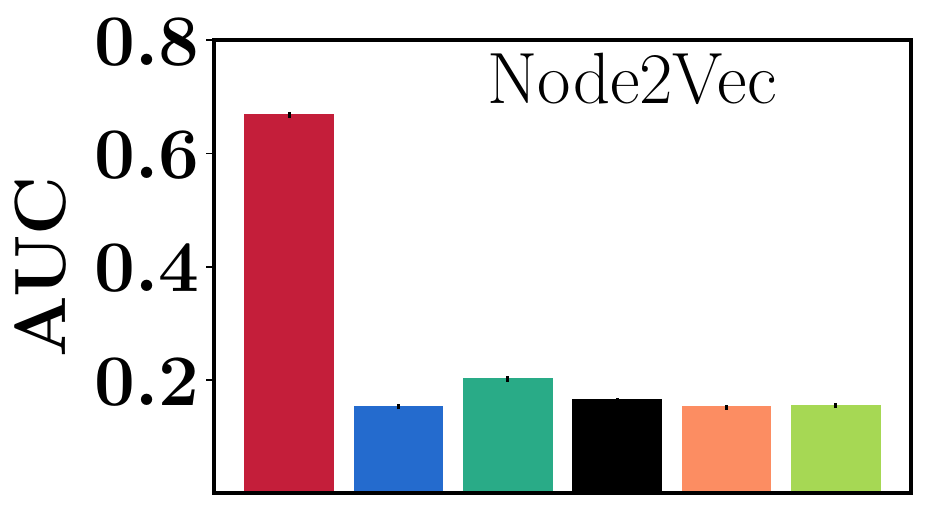}} 
\subfloat{\includegraphics[width=0.19\textwidth]{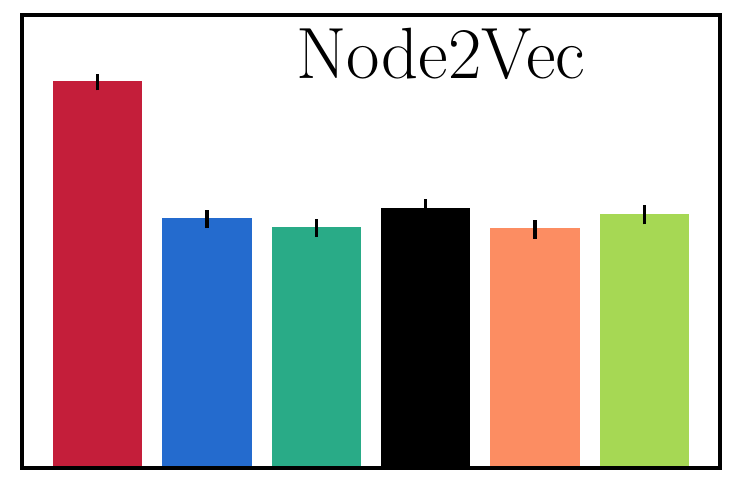}} 
\subfloat{\includegraphics[width=0.19\textwidth]{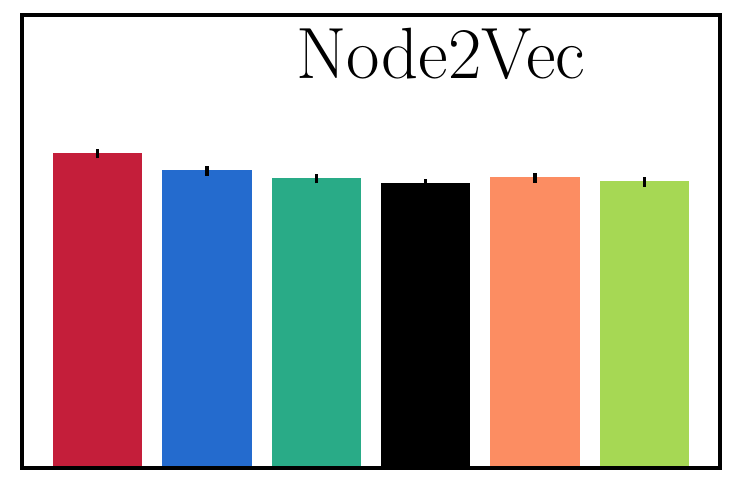}} 
\subfloat{\includegraphics[width=0.19\textwidth]{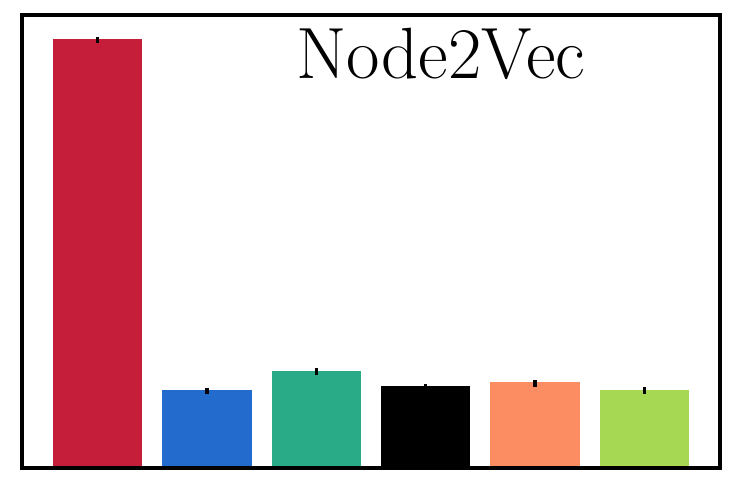}} 
\subfloat{\includegraphics[width=0.19\textwidth]{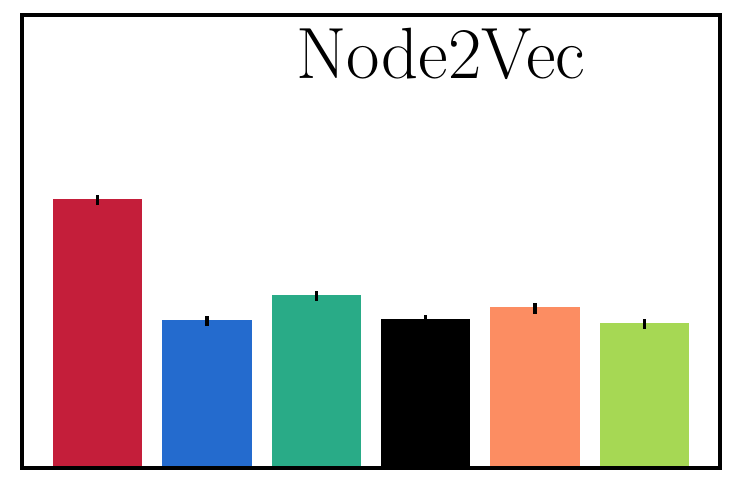}} \\
\subfloat{\includegraphics[height=0.13\textwidth]{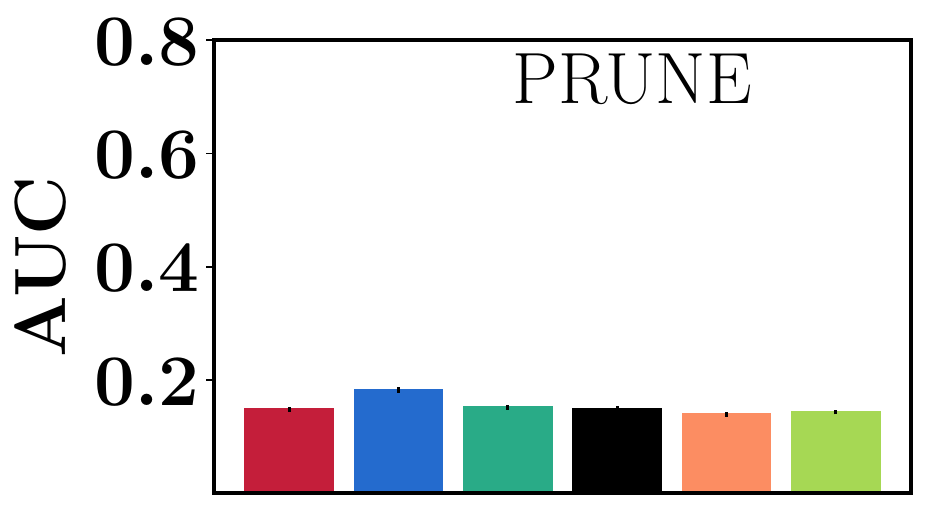}} 
\subfloat{\includegraphics[width=0.19\textwidth]{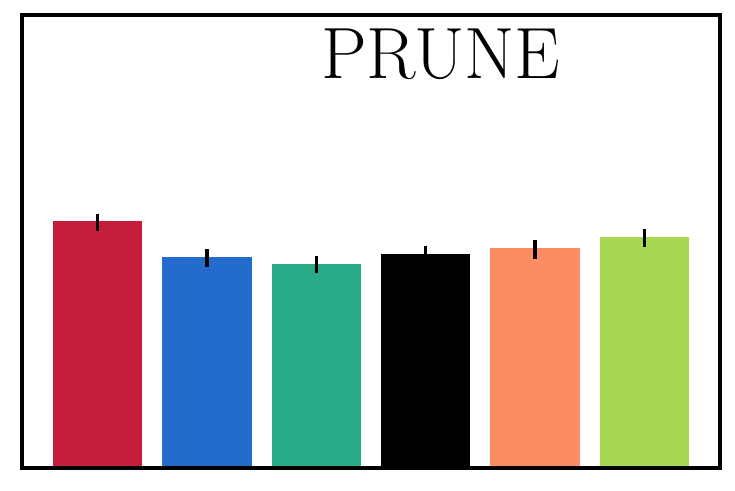}} 
\subfloat{\includegraphics[width=0.19\textwidth]{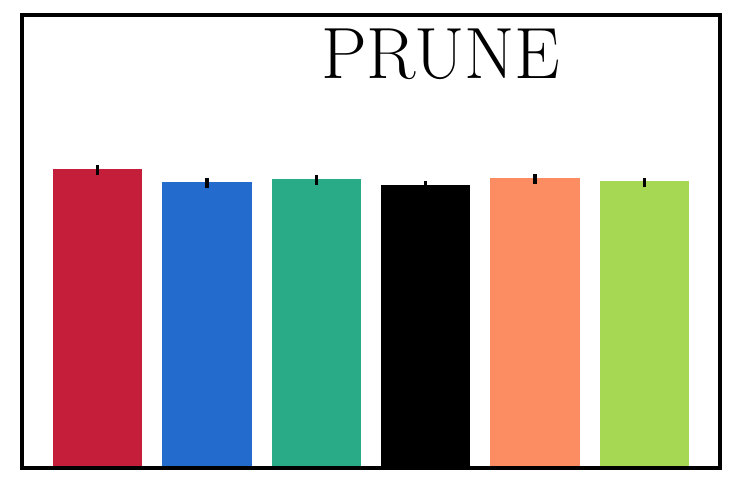}} 
\subfloat{\includegraphics[width=0.19\textwidth]{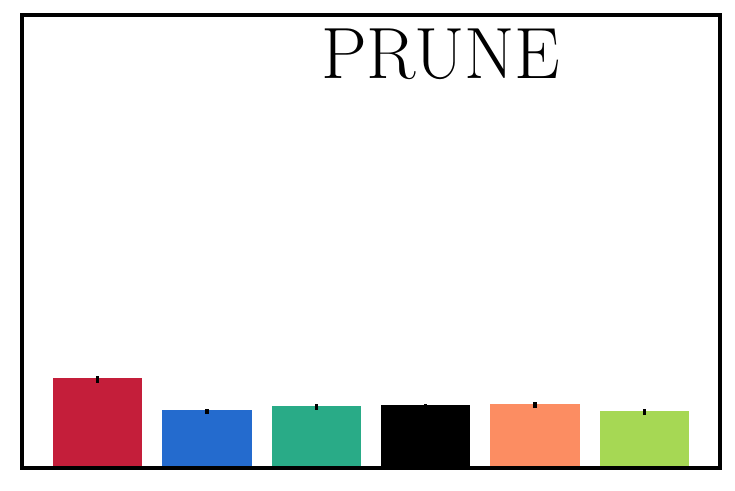}} 
\subfloat{\includegraphics[width=0.19\textwidth]{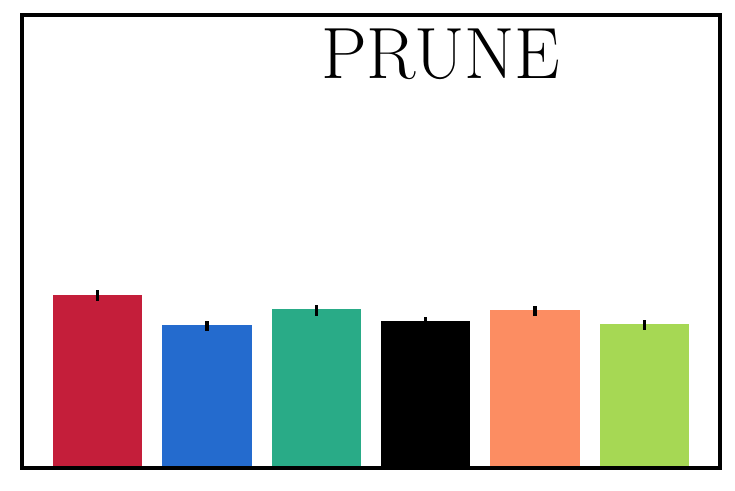}}\\
\subfloat{\includegraphics[height=0.13\textwidth]{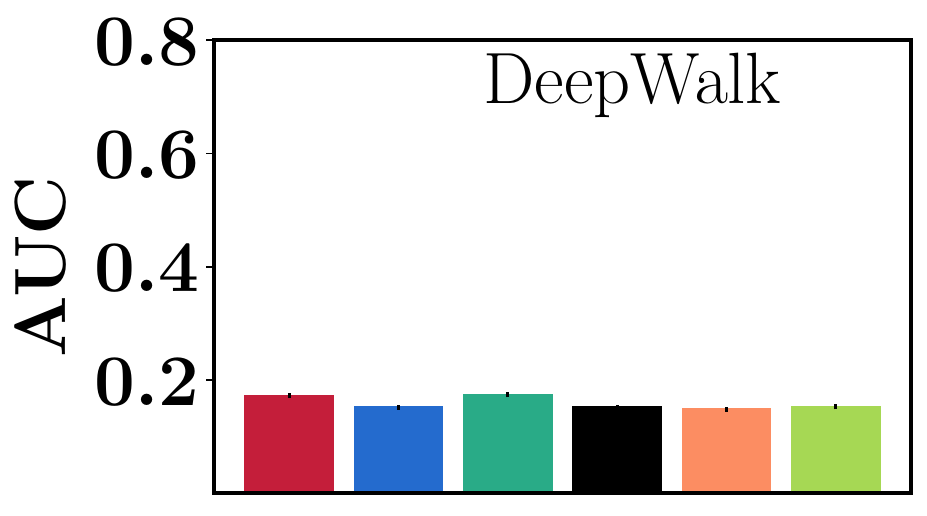}} 
\subfloat{\includegraphics[width=0.19\textwidth]{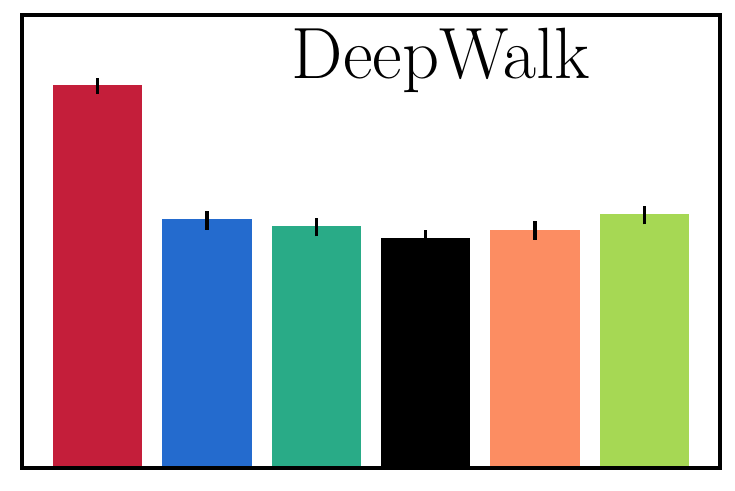}} 
\subfloat{\includegraphics[width=0.19\textwidth]{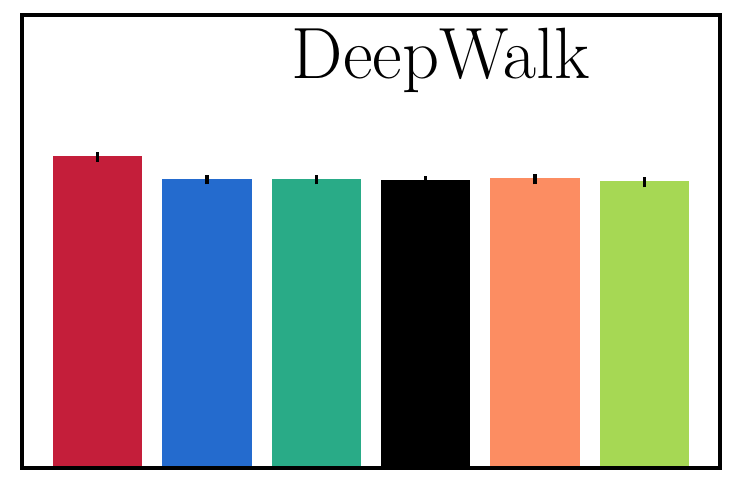}} 
\subfloat{\includegraphics[width=0.19\textwidth]{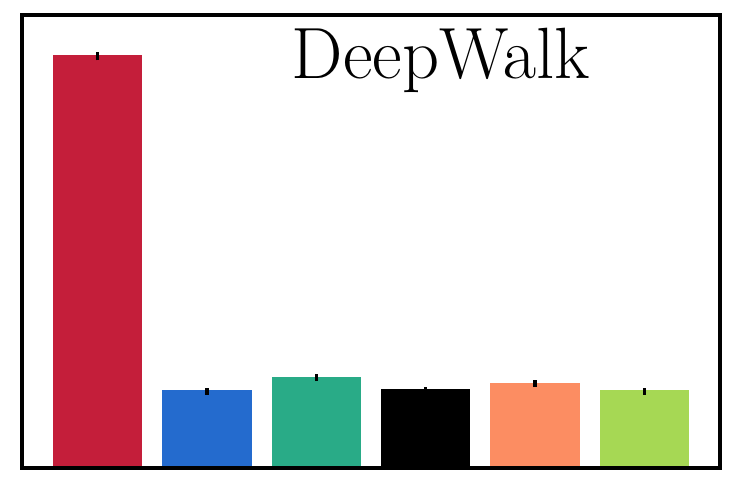}} 
\subfloat{\includegraphics[width=0.19\textwidth]{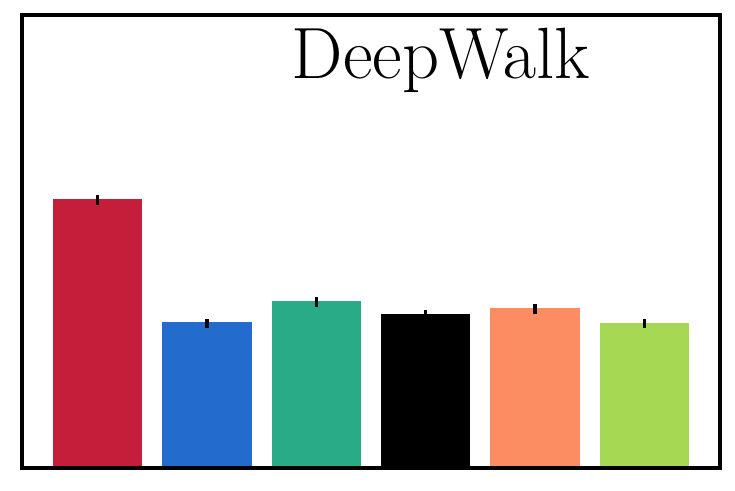}}\\
\subfloat{\includegraphics[height=0.13\textwidth]{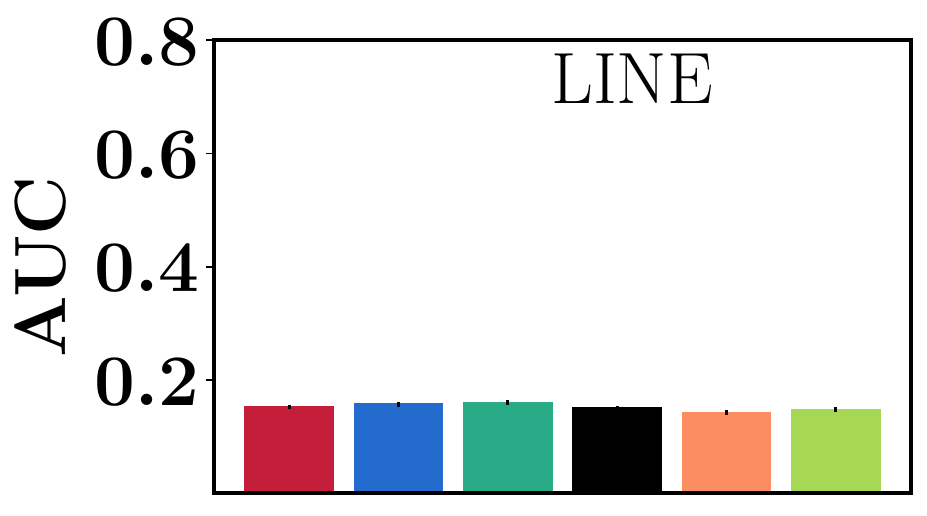}} 
\subfloat{\includegraphics[width=0.19\textwidth]{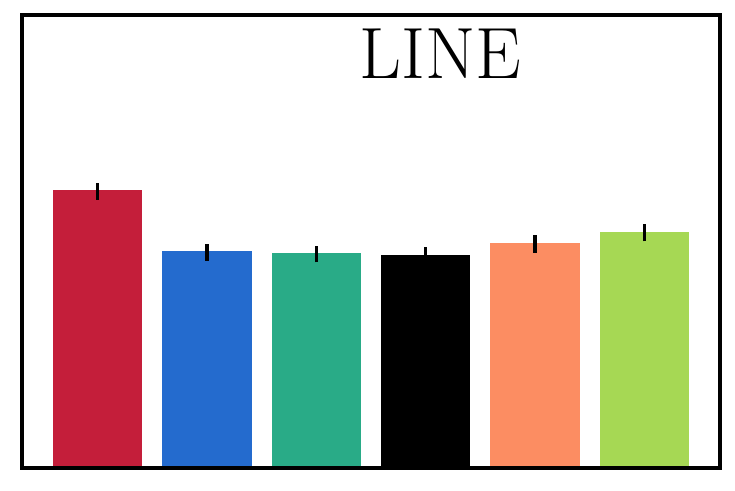}} 
\subfloat{\includegraphics[width=0.19\textwidth]{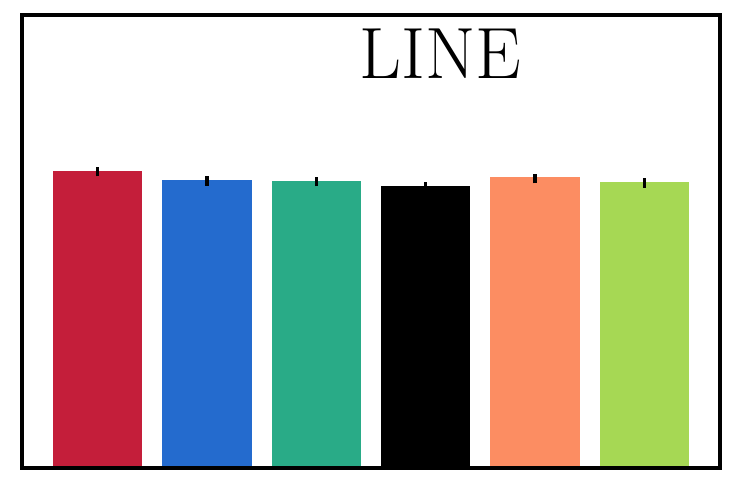}} 
\subfloat{\includegraphics[width=0.19\textwidth]{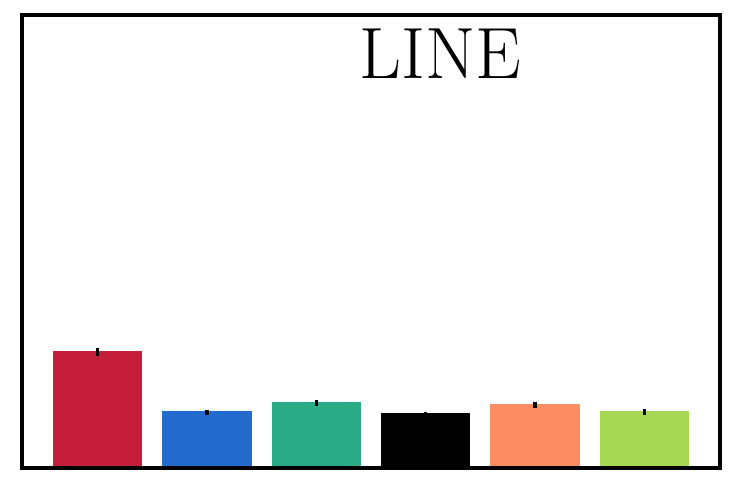}} 
\subfloat{\includegraphics[width=0.19\textwidth]{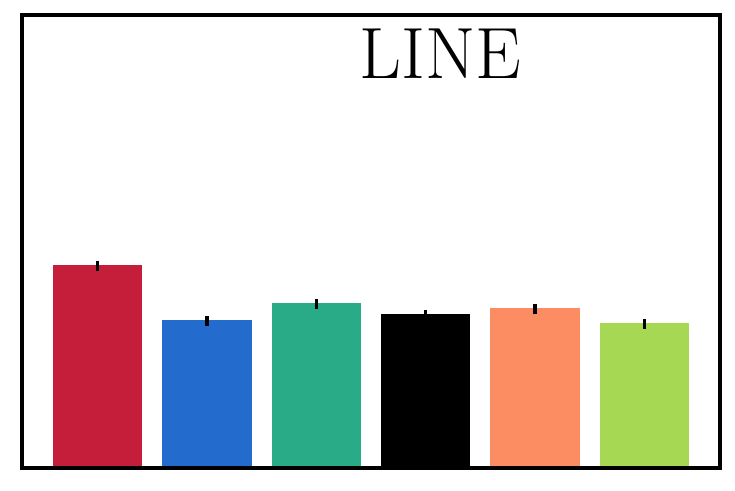}}\\
\subfloat[Facebook]{\setcounter{subfigure}{1}\includegraphics[height=0.13\textwidth]{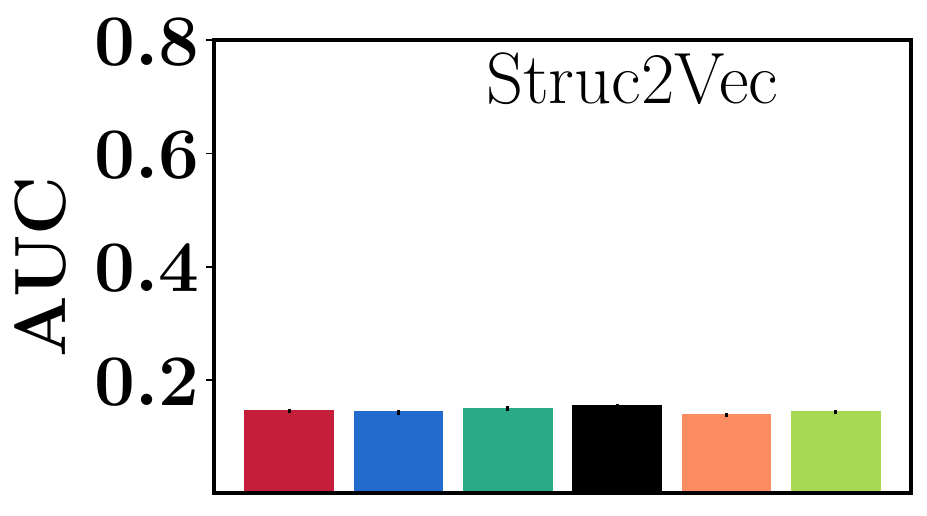}} 
\subfloat[USAir]{\includegraphics[width=0.19\textwidth]{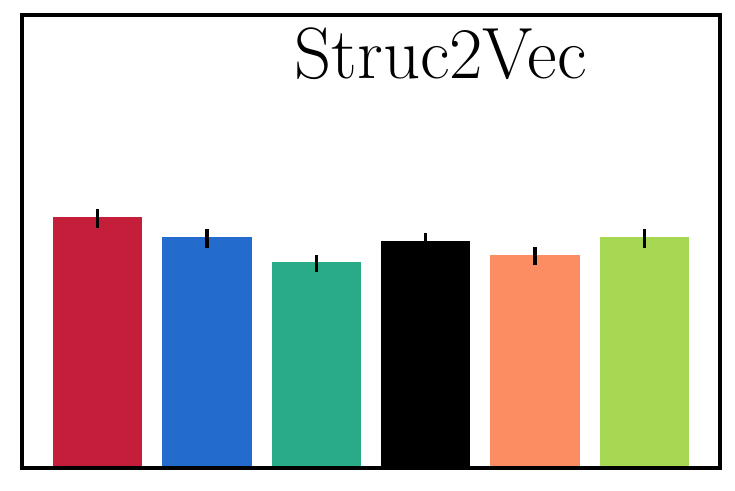}} 
\subfloat[Twitter]{\includegraphics[width=0.19\textwidth]{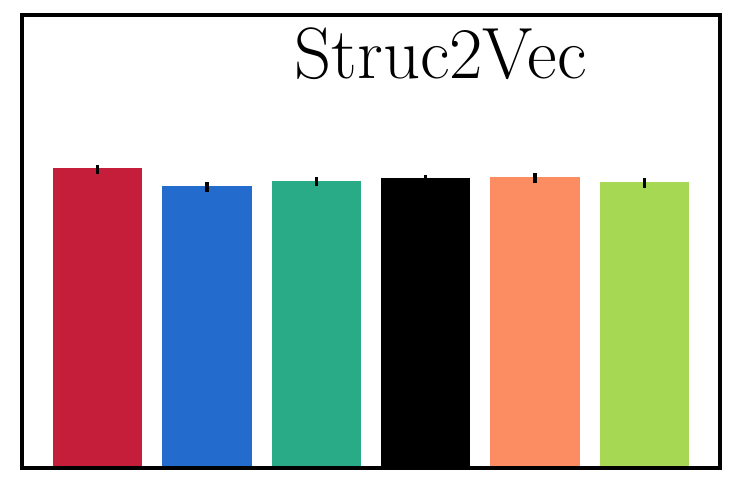}} 
\subfloat[Yeast]{\includegraphics[width=0.19\textwidth]{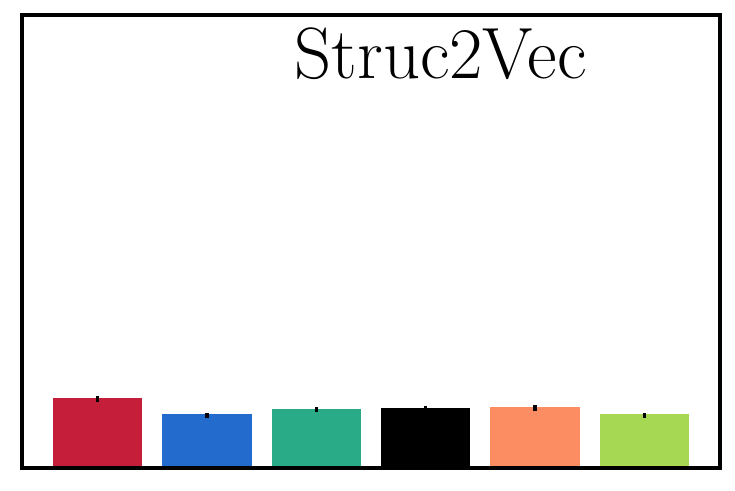}} 
\subfloat[PB]{\includegraphics[width=0.19\textwidth]{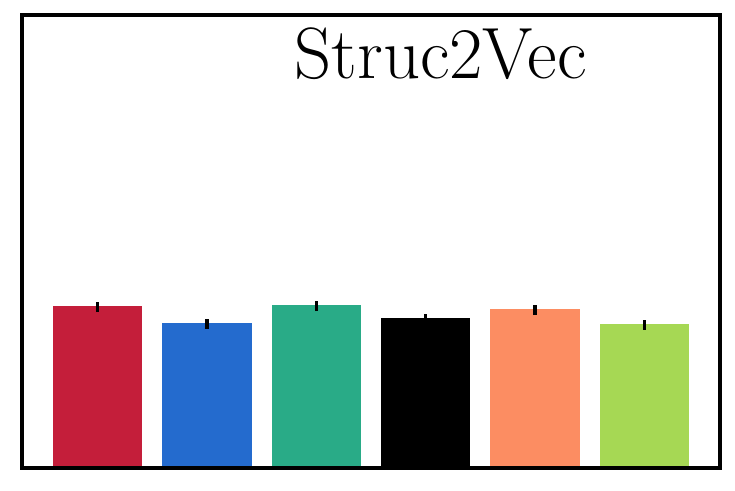}} } 
\caption{Performance (AUC) on {Facebook, USAir, Twitter, Yeast, PB datasets} for various DP algorithms: \our, \our-UMNN, \our-Lin, Staircase, Laplace and Exponential for 20\% held-out set with $\epsilon_p=0.1$ and $K=30$
for deep embedding based LP protocols (GCN, Node2Vec, PRUNE, DeepWalk, LINE, Struc2Vec).
\our\ outperforms the baselines in majority of the embedding methods. However, DP deep methods generally offer worse ranking utility than DP triad-based methods.}
\label{fig:Add2}
\end{figure*}

\end{document}